\definecolor{lime}{RGB}{89,223,0}
\definecolor{celestino}{RGB}{102,178,255}
\definecolor{bordeaux}{RGB}{140,0,0}
\newcommand{\R}{\mathbb{R}}
\newcommand{\N}{\mathbb{N}}
\newcommand{\cov}[1]{\langle{#1}\rangle}
\newcommand{\dfr}{\mathrm{\,d}}
\newcommand{\abs}[1]{\left\lvert{#1}\right\rvert}
\newcommand{\norm}[1]{\left\lVert{#1}\right\rVert}
\newcommand{\ttonde}[1]{({#1})}
\newcommand{\quadre}[1]{\left[{#1}\right]}
\newcommand{\graffe}[1]{\left\lbrace{#1}\right\rbrace}
\newcommand*{\crosssymbol}{%
    \text{%
      \raise 1ex\hbox{%
        \rlap{\vrule height.2pt depth.2pt width .75ex}%
        \hbox to .75ex{\hss\vrule height .5ex depth 1ex\hss}%
      }%
    }%
}
\newcommand*{\crossupsidedown}{%
    \text{%
      \raise .5ex\hbox{%
        \rlap{\vrule height.2pt depth.2pt width .75ex}%
        \hbox to .75ex{\hss\vrule height 1ex depth .5ex\hss}%
      }%
    }%
}
\DeclareMathOperator*{\esssup}{ess\,sup}
\theoremstyle{plain}
\theoremstyle{plain}
\theoremstyle{definition}
\newtheorem{defn}{\textbf{Definition}}[section]
\theoremstyle{plain}
\newtheorem{teo}{\textbf{Theorem}}[section]
\newtheorem{lemma}[teo]{\textbf{Lemma}}
\theoremstyle{remark}
\newtheorem{oss}[subsection]{\textit{Remark}}
\numberwithin{equation}{section}
\title{A partial stochastic equilibrium model and its limiting behaviour}
\author{Alessandro Prosperi}
\date{}
\begin{document}
\begin{abstract}
The existence of a (partial) market equilibrium price is proved in a complete, continuous time finite-agent market setting. The economic agents act as price takers in a fully competitive setting and maximize exponential utility from terminal wealth.
As the number $N$ of economic agents goes to infinity, the BSDE system of $N$ equations characterizing the equilibrium asset price dynamics decouples.
Due to the system's symmetry, the influence of the mean field of the agents, conditionally on the common noise, becomes deterministic.
\end{abstract}
\maketitle
\raggedbottom
\section{Introduction}
We prove the existence of a (partial) market equilibrium price in a complete,
continuous time finite-agent market setting and analyze the behaviour of
the equilibrium price as the number $N$ of economic agents goes to
infinity. The economic agents act as price takers in a fully
competitive setting and maximize exponential utility from terminal
wealth. The $i$-th agent receives a lump sum at a terminal time $T$
which depends on two independent Brownian motions $B$ and $W_i$. During
the time horizon $[0,T]$, the agent $i$ invests in two asset: $S_0$ and
$S_i$. $S_0$ is the asset whose price will be determined in the
equilibrium, while $S_i$ is a an asset that no one else except agent
$i$ can trade. This type of market can be seen as one in which each
agent is investing in a publicly traded asset, in addition to managing
his/her own company. In addition to all of these assets, the agents
have access to an interest-less money market account.

\medskip

Our approach relies on the solvability of a nonlinear quadratic
BSDE-system of $N$
equations, where $N$ is the number of agents, which characterizes the
equilibrium asset price dynamics. We prove that the system
has a unique solution for each $N$ and investigate the behaviour of
the system as the number $N$ of economic agents goes to infinity.
Using the symmetry of the system together with an approach similar to
representative-agent-type aggregation, we show that a ``sufficient statistic'',
namely the market-price-of-risk process $\lambda$, solves an autonomous one-dimensional BSDE, driven by $N+1$ Brownian motions. It is through the
convergence of solutions to this BSDE that we can show that the whole BSDE
system decouples in the $N\to\infty$ limit: the influence of the
mean field of the agents, conditionally on the common noise $B$, becomes
determistic.

\subsection{Notation and convention} 
For $j,d\in\N$, the set of $j\times
d-$matrices is denoted by $\R^{j\times d}$. The Euclidean space $\R^j$
is identified with the set $\R^{j\times 1}$, i.e., vectors in $\R^j$
are columns by default. $\abs{\cdot}$ will denote the Euclidean norm
on either $\R^{j\times d}$ or $\R^j$.
We work on a finite time
horizon $\quadre{0,T}$ with $T>0$, where $\mathbb{F}=\graffe{\mathcal
{F}_t}_{t\in\quadre{0,T}}$ is the augmented filtration generated by a sequence of independent, $1-$dimensional Brownian motions $\graffe{B}\cup\graffe{W_n}_{n\in\N}$.
{The stochastic integrals in what follows are taken with respect to $B$ or $(W_1,\dots, W_N)$ or
both. 
More precisely, given a vector of $d$ independent brownian motions $\tilde{B}$ (i.e. a $d-$dimensional brownian motion), the stochastic integral with respect to $\tilde{B}$ is taken for $\R^
{1\times d}-$valued (row) processes as if $\dfr \tilde{B}$ were a column of
its components, i.e., $\int \sigma\ttonde{t}\dfr\tilde{B}\ttonde{t}$ stands
for $\sum_{i=1}^d\int{\sigma_j\ttonde{t}}\dfr \tilde{B}_j({t})$. }
Similarly, for a process $Z$ with values in $\R^{j\times d}$,
$\int Z\ttonde{t}\dfr B\ttonde{t}$ is an $\R^j$ valued process whose
components are stochastic integrals of the rows $Z^i$ of $Z$ with
respect to $\dfr B\ttonde{t}$.
We denote by
\begin{itemize}
    \item $\mathcal{S}^p$ the set of all $\mathbb{F}-$adapted continuous
     process $Y$ on $\quadre{0,T}$ such that
    \begin{align*}
        \norm{Y}_{\mathcal{S}^p}:=E\quadre{\sup_{0\leq r \leq T}\abs
         {Y\ttonde{r}}^p}^{\frac{1}{p}}<\infty,
    \end{align*}
    \item $\mathcal{S}^\infty$ the set of all $\mathbb
     {F}-$adapted continuous process $Y$ on $\quadre{0,T}$ such that 
    \begin{align*}
    \norm{Y}_{\mathcal{S}^\infty}:=\esssup_{\omega\in\Omega}\sup_
         {0\leq r\leq T}\abs{\ttonde{Y\ttonde{r}}\ttonde{\omega}}<\infty,
    \end{align*}
    \item $H^p$ the set of $\mathbb{F}-$predictable $\R^d-$valued
     processes $Z$ such that
    \begin{align*}
        \norm{Z}_{H^p}:=E\quadre{\Big({\int_0^T\abs{Z(r)}^2\dfr r}\Big)^
         {\frac{p}{2}}}^{\frac{1}{p}}<\infty
    \end{align*}
    \item $\mathcal{H}^p$ the space of all continuous local martingales $M$, such that
    \begin{align*}
        \norm{M}_{\mathcal{H}^p}:=E\quadre{({\cov{M}_T-\cov{M}_0})^
         {\frac{p}{2}}}^{\frac{1}{p}}<\infty.
    \end{align*}
\end{itemize}
The set of all processes of bounded mean oscillation is denoted by BMO
(we refer the reader to \cite{kazamaki2006continuous} for all the
necessary background on the BMO processes) and the set of all
$\Tilde{B}-$integrable processes $\sigma$ such that $\int \sigma \dfr \Tilde{B}$ is in BMO, where $\Tilde{B}$ is a $d-$dimensional brownian motion, is denoted by bmo.
The set of all $\mathbb{F}-$progressively measurable process is denoted
by $\mathcal{P}$. $\mathcal{P}^r$ denotes the set of all $c\in\mathcal
{P}$ with $\int_0^T\abs{c}^r\dfr t<\infty$ a.s. The same notation is
used for scalar, vector or matrix valued processes - the distinction
will always be clear from the context.

\section{The problem}
\subsection{Model primitives.}
The model primitives can be divided into
 two groups. In the first we postulate the form of the dynamics
 of the traded assets, and in the second we describe the
 characteristics of individual agents.
\subsubsection{The traded assets}
Our market consists of $1$ risk-less asset normalized to a constant process
with value $1$, and
$N+1$ risky assets with price dynamics given by:
\begin{align}
    \notag&S_0\ttonde{t} = S_0\ttonde{0} + \int_0^t S_0\ttonde
     {u}\lambda\ttonde{u}\dfr u + \int_0^t S_0\ttonde{u}\dfr B\ttonde
     {u},&&\forall t\in\quadre{0,T},\\
    \label{altririskyasset}&S_n\ttonde{t} = S_n\ttonde
     {0} + \int_0^tS_n\ttonde{u}\dfr W_n\ttonde
     {u},\qquad n=1,\dots,N.&&\forall t\in\quadre{0,T}.
\end{align}
We will simply write $S$ to refer to the vector of all risky assets prices.

\subsubsection{Agents}
There is a finite number $N\in\N$ of
economic agents, and agent $i$ invests in two assets: $S_0$ and $S_i$.
Moreover each economic agent is characterized by the so called \textbf{random-endowment}. Each agent receives a lump sum
$e^i_T = e_T\ttonde{B\ttonde{T}, W_i\ttonde{T}} $ at time $T$, for
some $\alpha$-Hölder continuous bounded function $e_T:\R\times\R\to\R$.
\begin{oss}
We do not distinguish agents based on their initial risk-less asset share holdings, as it's usually done in other equilibrium models.
Indeed, as the reader can easily notice from the following characterization, the optimal portfolio strategy does not depend on the initial holding. 
The reader can just think about the initial holding of agent $i$ as a constant added to his/her {random-endowment} $e^i$.
\end{oss}
Finally each agent wants to maximize the expected utility arising from the exponential utility function
\begin{align}\label{utfunagent}
    U\ttonde{c} = -e^{-c}\qquad\text{for }c\in\R.
\end{align}
\begin{oss}
    Considering a utility function of the form (\ref{utfunagent}) for every agent, we implicitly assume, without loss of generality, that each agent's absolute risk-aversion coefficient is normalized to $1$.
\end{oss} 
\subsection{Admissibility and equilibrium}
\begin{defn}
A progressively measurable process ${\pi}=\ttonde{\pi_0,\dots,\pi_N}$ with values in $\R^{N+1}$ is said to be an \textbf{admissible strategy} if $\pi\in\mathrm{bmo}$. 
The set of all admissible strategies is denoted by $\mathcal{A}$.
\end{defn}

\begin{oss}
Given an admissible strategy $\pi$, the first component ${\pi_0}$ denotes the investment in $S_0$, while for each $i$, $\pi_i$ denotes the investment in the $i-$th asset $S_i$. 
\end{oss}

\begin{oss}
One must check that $\pi_0\lambda\in {\mathcal P}^1$ and $\pi \in {\mathcal P}^2$ in order to have all the integrals well defined. 
The reader should check that this is automatically satisfied when $\pi, \lambda \in \mathrm{bmo}$.
\end{oss}

The \textbf{gains process} is the continuous adapted process $X^{\pi}$ such that $\dfr X^{\pi} := {\pi}\ttonde{t}\cdot\frac{\dfr S\ttonde{t}}{S\ttonde{t}}$ and $X^{\pi}\ttonde{0}=0$.
\begin{defn}
We say that a bmo process $\lambda^*$ is an \textbf{equilibrium market price of risk} if there exist admissible strategies $\hat{\pi}^i\in\mathcal{A}$, $i=1,\dots, N$, such that the following two conditions hold:
\begin{enumerate}
    \item \textit{Single-agent optimality}: For each $i=1,\dots,N$ and all $\pi\in\mathcal{A}$, we have
    \begin{align*}
        E\quadre{U\big({X^{\hat{\pi}^i}\ttonde{T} + e^i\ttonde{T}}\big)}\geq E\quadre{U\ttonde{X^{\pi}\ttonde{T} + e^i\ttonde{T}}}
    \end{align*}
    \item \textit{Partial market clearing}: 
    \begin{align*}
        \sum_{i=1}^N \hat{\pi}_0^i\ttonde{t} = 0,\text{ on }[0,T).
    \end{align*}
\end{enumerate}
\end{defn}
\section{Results}
\subsection{A BSDE characterization} 
Our first result is the characterization of equilibria in terms of a system of $N$ backward stochastic differential equations (BSDE).
A solution to such a system is a pair $\ttonde{Y,Z}$, where the vector process $Y$ and the matrix process $Z$ take values in $\R^N$ and $\R^{N\times(N+1)}$, respectively.
We say that $\ttonde{Y,Z}$ is an $\ttonde{\mathcal{S}^\infty\times\mathrm{bmo}}-$solution if all the components of $Y$ are in $\mathcal{S}^\infty$, and all the components of $Z$ are in bmo.
\begin{teo}\label{BSDEchar}
(A BSDE Characterization)
Suppose that $\ttonde{Y^{\ttonde{N}},Z^{\ttonde{N}}}$ is an ${\mathcal{S}^\infty\times\mathrm{bmo}}-$solution to 
\begin{align}\label{equilibriumNagent}
    \begin{cases}
    \dfr Y^{\ttonde{i,N}}\ttonde{t} &= Z^{\ttonde{i,N}}_1\ttonde
     {t}\dfr B\ttonde{t} +Z^{\ttonde{i,N}}_2\ttonde{t}\dfr W_i\ttonde
     {t} -\big({\frac{1}{2}\ttonde{\lambda^{\ttonde{N}}\ttonde
     {t}}^2 - \lambda^{\ttonde{N}}\ttonde{t}Z^{\ttonde{i,N}}_1\ttonde
     {t}}\big)\dfr t,\\
     Y^{\ttonde{i,N}}\ttonde{T}&=e_T\ttonde{B\ttonde{T},W_i\ttonde{T}}\in L^\infty\ttonde{\Omega, \mathcal{F}_T, \mathbb{P}}.
    \end{cases}
\end{align}
For $t\in [0,T)$ and $i=1,\dots,N.$
Then 
\begin{align*}
    \lambda^{\ttonde{N}}\ttonde{t} = \frac{1}{N}\sum_{i=1}^NZ^{\ttonde{i,N}}_1\ttonde{t}.
\end{align*}
is the equilibrium market price for the $N-$dimensional problem.
\end{teo}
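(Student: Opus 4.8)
The plan is to verify the two defining properties of an equilibrium market price of risk directly, by exhibiting the candidate optimal strategies and running a martingale-optimality argument for each agent. Given the $\ttonde{\mathcal{S}^\infty\times\mathrm{bmo}}$-solution $\ttonde{Y^{\ttonde N},Z^{\ttonde N}}$, set $\lambda^{\ttonde N}=\frac1N\sum_{i=1}^N Z^{\ttonde{i,N}}_1$ and propose, for each $i$, the strategy $\hat\pi^i$ that invests $\hat\pi_0^i=\lambda^{\ttonde N}-Z^{\ttonde{i,N}}_1$ in $S_0$, invests $\hat\pi_i^i=-Z^{\ttonde{i,N}}_2$ in $S_i$, and nothing in the remaining assets. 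First I would record admissibility: since $Z^{\ttonde{i,N}}_1,Z^{\ttonde{i,N}}_2\in\mathrm{bmo}$ and $\lambda^{\ttonde N}$ is a finite average of the $Z^{\ttonde{j,N}}_1$, each $\hat\pi^i$ is again bmo, hence $\hat\pi^i\in\mathcal A$.

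For single-agent optimality I would fix $i$, take the market price of risk of $S_0$ to be $\lambda^{\ttonde N}$, and apply the martingale optimality principle to $R^\pi\ttonde t:=U\big(X^\pi\ttonde t+Y^{\ttonde{i,N}}\ttonde t\big)=-\exp\big(-(X^\pi\ttonde t+Y^{\ttonde{i,N}}\ttonde t)\big)$, whose terminal value is $U\big(X^\pi\ttonde T+e^i\ttonde T\big)$ because $Y^{\ttonde{i,N}}\ttonde T=e_T\ttonde{B\ttonde T,W_i\ttonde T}$. Writing $V^\pi=X^\pi+Y^{\ttonde{i,N}}$ and substituting the dynamics of $X^\pi$ and of $Y^{\ttonde{i,N}}$ from \eqref{equilibriumNagent}, Itô's formula gives $\dfr R^\pi=e^{-V^\pi}\big(\dfr V^\pi-\tfrac12\,\dfr\cov{V^\pi}\big)$, and the $\dfr t$-coefficient of $\dfr V^\pi-\tfrac12\,\dfr\cov{V^\pi}$ reduces, after completing the square (and using that $W_j$ is independent of $B,W_i$ for $j\neq0,i$), to
\begin{align*}
 -\tfrac12\ttonde{\pi_0+Z^{\ttonde{i,N}}_1-\lambda^{\ttonde N}}^2-\tfrac12\ttonde{\pi_i+Z^{\ttonde{i,N}}_2}^2-\tfrac12\sum_{j\neq0,i}\pi_j^2\;\le\;0,
\end{align*}
which vanishes precisely at $\pi=\hat\pi^i$. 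Since $e^{-V^\pi}>0$, it follows that $R^\pi$ is a local supermartingale for every admissible $\pi$ and $R^{\hat\pi^i}$ is a local martingale. Comparing values at $0$ and $T$, and using $X^\pi\ttonde 0=0$ so that $R^\pi\ttonde 0=R^{\hat\pi^i}\ttonde 0=-e^{-Y^{\ttonde{i,N}}\ttonde 0}$, would give $E\quadre{U\ttonde{X^\pi\ttonde T+e^i\ttonde T}}\le -e^{-Y^{\ttonde{i,N}}\ttonde 0}=E\quadre{U\ttonde{X^{\hat\pi^i}\ttonde T+e^i\ttonde T}}$, which is exactly condition (1).

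The main obstacle is upgrading these local statements to genuine (super)martingale inequalities. For the optimal strategy this is clean: along $\hat\pi^i$ one has $\hat\pi_i^i+Z^{\ttonde{i,N}}_2=0$ and $\hat\pi_0^i+Z^{\ttonde{i,N}}_1=\lambda^{\ttonde N}$, so $R^{\hat\pi^i}=R^{\hat\pi^i}\ttonde 0\,\mathcal E\big(-\!\int\lambda^{\ttonde N}\dfr B\big)$ is, up to the constant $R^{\hat\pi^i}\ttonde 0$, the stochastic exponential of a BMO martingale, hence a true uniformly integrable martingale by Kazamaki's theorem \cite{kazamaki2006continuous}. The genuine difficulty is the supermartingale property for an \emph{arbitrary} admissible $\pi$: because $R^\pi\le0$ is bounded above rather than below, the elementary Fatou argument for nonnegative local supermartingales does not apply, and one must instead establish uniform integrability of $\graffe{R^\pi\ttonde\tau:\tau\text{ a stopping time}}$. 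This is where the bmo hypotheses are essential — since $\pi$ and $\lambda^{\ttonde N}$ are bmo, the integrals driving $X^\pi$ are BMO martingales, so the John–Nirenberg and energy inequalities give the exponential moments of $e^{-X^\pi\ttonde T}$, which, combined with the boundedness of $Y^{\ttonde{i,N}}\in\mathcal S^\infty$, yield the required integrability. I expect the bulk of the technical work to sit here; the optimality structure itself is forced by the completion of the square above.

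Finally, partial market clearing is immediate from the choice of $\lambda^{\ttonde N}$: summing first components,
\begin{align*}
 \sum_{i=1}^N\hat\pi_0^i\ttonde t=\sum_{i=1}^N\ttonde{\lambda^{\ttonde N}\ttonde t-Z^{\ttonde{i,N}}_1\ttonde t}=N\lambda^{\ttonde N}\ttonde t-\sum_{i=1}^N Z^{\ttonde{i,N}}_1\ttonde t=0,\qquad\text{on }[0,T),
\end{align*}
by the very definition $\lambda^{\ttonde N}=\frac1N\sum_{i=1}^N Z^{\ttonde{i,N}}_1$. Combining admissibility, single-agent optimality, and market clearing shows that $\lambda^{\ttonde N}$ is an equilibrium market price of risk for the $N$-agent problem.
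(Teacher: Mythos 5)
Your construction is structurally the same as the paper's: the same candidate optimizer $\hat\pi_0^i=\lambda^{(N)}-Z_1^{(i,N)}$, $\hat\pi_i^i=-Z_2^{(i,N)}$, the same completion of the square in the drift, Kazamaki's theorem for the true-martingale property along $\hat\pi^i$, and the same one-line clearing computation. You also correctly identified the crux that any proof must address: a nonpositive local supermartingale need not be a supermartingale (minus the reciprocal of a three-dimensional Bessel process is the standard counterexample), so Fatou alone cannot close the argument for an arbitrary admissible $\pi$.

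However, the resolution you propose --- uniform integrability of $\graffe{R^\pi\ttonde{\tau}}$ via ``John--Nirenberg exponential moments of $e^{-X^\pi\ttonde{T}}$'' --- is not merely incomplete: it targets a statement that is false in general. John--Nirenberg controls $E\quadre{\exp\ttonde{c\abs{M_T}}}$ only for $c$ below a threshold of order $1/\norm{M}_{\mathrm{BMO}}$, and admissibility places no smallness restriction on $\norm{\pi}_{\mathrm{bmo}}$. Concretely, take $\pi_0=0$ and a large pure position in the private asset, so that $X^\pi=\int\pi_i\dfr W_i$ is a BMO martingale with large norm; its terminal value can have exponential tails too heavy for $e^{-X^\pi\ttonde{T}}$ to be integrable, so $R^\pi\ttonde{T}\notin L^1$ and the family $\graffe{R^\pi\ttonde{\tau}}$ is certainly not uniformly integrable. (For such $\pi$ the optimality inequality holds trivially because the expected utility is $-\infty$, but your argument cannot produce it, nor does it cover the intermediate regime where $R^\pi\ttonde{T}\in L^1$ yet UI fails.) The paper's proof sidesteps integrability entirely through a multiplicative decomposition: writing $\dfr V^\pi = V^\pi\mu_V\dfr t + V^\pi\sigma_V\cdot\ttonde{\dfr B,\dfr W_i}^T$, where $\mu_V=\frac12\ttonde{\pi_0+Z_1^{\ttonde{i,N}}-\lambda^{\ttonde{N}}}^2+\frac12\ttonde{\pi_i+Z_2^{\ttonde{i,N}}}^2\geq 0$ is exactly your completed square, one has $V^\pi\ttonde{t}=M\ttonde{t}A\ttonde{t}$ with $M:=V^\pi\ttonde{0}\,\mathcal{E}\ttonde{\int_0^\cdot\sigma_V\cdot\ttonde{\dfr B,\dfr W_i}^T}$ a \emph{negative} true martingale (Kazamaki, since $\sigma_V\in\mathrm{bmo}$) and $A:=\exp\ttonde{\int_0^\cdot\mu_V\dfr r}$ nondecreasing. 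Then for $s\leq t$ the pointwise inequality $M\ttonde{t}A\ttonde{t}\leq M\ttonde{t}A\ttonde{s}$ and conditioning give $E_s\quadre{V^\pi\ttonde{t}}\leq A\ttonde{s}E_s\quadre{M\ttonde{t}}=V^\pi\ttonde{s}$, i.e., the supermartingale inequality for every admissible $\pi$, with $E\quadre{V^\pi\ttonde{T}}=-\infty$ permitted. Replace your UI step with this decomposition and the rest of your proof goes through unchanged.
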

\begin{oss}\label{formavecdelsystNagent}
Defining the matrix $\boldsymbol{z}\in\R^{N\times N+1}$ and the function $\boldsymbol{f}:\R^{N\times N+1}\to\R^N$ by: 
\begin{align*}
    \boldsymbol{z}=\begin{pmatrix}
Z^{\ttonde{1,N}}_1&Z^{\ttonde{1,N}}_2&0&\dots& &0\\
Z^{\ttonde{2,N}}_1&0&Z^{\ttonde{2,N}}_1&0&\dots&0\\
\vdots\\
Z^{\ttonde{N,N}}_1&0&\dots& & 0&Z^{\ttonde{N,N}}_2
\end{pmatrix},
\boldsymbol{f}\ttonde{\boldsymbol{z}}=\begin{pmatrix}
-\frac{1}{2}\ttonde{\boldsymbol{\lambda}\ttonde{\boldsymbol{z}}}^2 + \ttonde{\boldsymbol{\lambda}\ttonde{\boldsymbol{z}}}\boldsymbol{z}_{11}\\
-\frac{1}{2}\ttonde{\boldsymbol{\lambda}\ttonde{\boldsymbol{z}}}^2 + \ttonde{\boldsymbol{\lambda}\ttonde{\boldsymbol{z}}}\boldsymbol{z}_{21}\\
\vdots\\
-\frac{1}{2}\ttonde{\boldsymbol{\lambda}\ttonde{\boldsymbol{z}}}^2 + \ttonde{\boldsymbol{\lambda}\ttonde{\boldsymbol{z}}}\boldsymbol{z}_{N1}\\
\end{pmatrix}
\end{align*}
where $\boldsymbol{\lambda}\ttonde{\boldsymbol{z}}=\frac{1}{N}\sum_{i=1}^N\boldsymbol{z}_{i1}$, we can write the system in (\ref{equilibriumNagent}) as
\begin{align*}
    \dfr\boldsymbol{Y}\ttonde{t} = \boldsymbol{z}\ttonde{t}\dfr \boldsymbol{W}\ttonde{t} + \boldsymbol{f}\ttonde{\boldsymbol{z}\ttonde{t}}\dfr t,\qquad \boldsymbol{Y}\ttonde{T} = \boldsymbol{e}_T.
\end{align*}
Where
\begin{align*}
\ttonde{\boldsymbol{e}_T}^T&:=\ttonde{e_T\ttonde{B\ttonde{T},W_1\ttonde{T}},e_T\ttonde{B\ttonde{T},W_2\ttonde{T}},\dots,e_T\ttonde{B\ttonde{T},W_N\ttonde{T}}},\\
\ttonde{\boldsymbol{Y}}^T &:=\ttonde{ Y^{\ttonde{1,N}},Y^{\ttonde{2,N}},\dots,Y^{\ttonde{N,N}}}\text{ and}\\ \ttonde{\boldsymbol{W}}^T&:=\ttonde{B,W_1,W_2,\dots,W_N}.
\end{align*}
\end{oss}
\begin{proof}
Having fixed an $\mathcal{S}^\infty\times\mathrm{bmo}$ solution $\ttonde{Y,Z}$ we pick an agent $i\in\graffe{1,\dots,N}$ and a strategy $\pi\in\mathcal{A}$ and define the processes $X^i$ and $V^i$ by 
\begin{align*}
    X^i\ttonde{\cdot} = \int_0^\cdot \pi_0\ttonde{s}\frac{\dfr S_0\ttonde{s}}{S_0\ttonde{s}} + \int_0^\cdot \pi_i\ttonde{s}\frac{\dfr S_i\ttonde{s}}{S_i\ttonde{s}},\ V^i\ttonde{\cdot}=-\exp\ttonde{-X^i\ttonde{\cdot}-Y^i\ttonde{\cdot}},
\end{align*}
where $\frac{\dfr S_0\ttonde{t}}{S_0\ttonde{t}}=\lambda^{\ttonde{N}}\ttonde{t}\dfr t + \dfr B\ttonde{t}$ and $S_i\ttonde{t}$ is as in (\ref{altririskyasset}).
The semimartingale decomposition of $V^i$ is given by $\dfr V^i = V^i\mu_V\dfr t + V^i\sigma_V\cdot\ttonde{\dfr B,\dfr W_i}^T $, where 
\begin{align}
    \begin{aligned}\label{dynofV}
    \mu_V &= -\Big( \pi_0\ttonde{t}\lambda^{\ttonde{N}}\ttonde{t} -\frac{1}{2}\big({\ttonde{\pi_0\ttonde{t} + Z^{\ttonde{i,N}}_1\ttonde{t}}^2+\ttonde{\pi_i\ttonde{t} + Z^{\ttonde{i,N}}_2\ttonde{t}}^2}\big) \\
    & \qquad\quad - \big({\frac{1}{2}\ttonde{\lambda^{\ttonde{N}}\ttonde{t}}^2 - \lambda^{\ttonde{N}}\ttonde{t}Z^{\ttonde{i,N}}_1\ttonde{t}}\big) \Big) \\
    \sigma_V &=-(\pi_0\ttonde{t} + Z^{\ttonde{i,N}}_1\ttonde{t}, \pi_i\ttonde{t} + Z^{\ttonde{i,N}}_2\ttonde{t}).
    \end{aligned}
\end{align}
It is straightforward to show that $\mu_V\geq 0$ for all values of $\pi_0$ and $\pi_1$.
Furthermore the coefficients $\mu_V$ and $\sigma_V$ are regular enough to conclude that $V^i$ is a supermartingale for all admissible $\pi$.
Indeed, (\ref{dynofV}) gives us the following
\begin{align*}
V^i\ttonde{t}=\mathcal{E}\Big({\int_0^\cdot\sigma_V\ttonde{r}\cdot\ttonde{\dfr B\ttonde{r},\dfr W_i\ttonde{r}}^T}\Big)_t\exp\Big({\int_0^t\mu_V\ttonde{r}\dfr r}\Big)=:M\ttonde{t}A\ttonde{t}
\end{align*}
with $M\ttonde{t}$ a negative martingale since $\sigma_V\in\mathrm{bmo}$ as seen in \cite[][,Theorem 2.3]{kazamaki2006continuous} and $A\ttonde{t}$ an increasing process.
This allows us to conclude that for each $s<t$ we have 
\begin{align*}
    E_s\quadre{M\ttonde{t}A\ttonde{t}}\leq E_s\quadre{M\ttonde{t}A\ttonde{s}} = A\ttonde{s}E_s\quadre{M\ttonde{t}}=M\ttonde{s}A\ttonde{s}.
\end{align*}
Therefore, 
\begin{align*}
    \sup_{\pi\in\mathcal{A}_{\lambda^{\ttonde{N}}}}E\quadre{U\ttonde{X^i\ttonde{T} + e^i_T}}= \sup_{\pi\in\mathcal{A}_{\lambda^{\ttonde{N}}}}E\quadre{V^i\ttonde{T}}\leq  E\quadre{V^i\ttonde{0}}.
\end{align*}
Next, in order to construct the optimizer, we produce a strategy for which $\mu_V=0$.
More precisely let $\hat{X}^i$ be the unique solution to the following linear SDE:
\begin{align*}
    \hat{X}^i\ttonde{0}= 0,\ \dfr \hat{X}^i\ttonde{t}= ( \lambda^{\ttonde{N}}\ttonde{t} -Z^{\ttonde{i,N}}_1\ttonde{t} )\frac{\dfr S_0\ttonde{t}}{S_0\ttonde{t}} - Z^{\ttonde{i,N}}_2\ttonde{t} \frac{\dfr S_i\ttonde{t}}{S_i\ttonde{t}}
\end{align*}
and set 
\begin{align*}
    \hat{\pi}_0^i\ttonde{t} = \lambda^{\ttonde{N}}\ttonde{t} -Z^{\ttonde{i,N}}_1\ttonde{t},\qquad \hat{\pi}_i^i\ttonde{t}= - Z^{\ttonde{i,N}}_2\ttonde{t}.
\end{align*}
It follows immediately that $\hat{\pi}^i\in\mathcal{A}$ and that the choice of $\hat{\pi}^i$, makes the process $V^i$ a martingale hence the strategy $\hat{\pi}^i$ optimal for agent $i$.
Turning to market clearing, we
compute
\begin{align*}
    \sum_{i=1}^N \hat{\pi}_1^i\ttonde{t} = \sum_{i=1}^N \ttonde{\lambda^{\ttonde{N}}\ttonde{t} -Z^{\ttonde{i,N}}_1\ttonde{t}} = 0, 
\end{align*}
which proves the statement of the theorem.
\end{proof}
\subsection{The representative agent limit equation}
In this section we show that the \emph{representative agent}
\begin{align*}
    \Bar{Y}_N\ttonde{t}={\frac{1}{N}}\sum_{i=1}^NY^{\ttonde{i,N}}
    \ttonde{t}
\end{align*}
process converges as $N\to\infty$ to the solution to the so called
\emph{limiting equation}.
We then get an explicit formula for the equilibrium market price $\lambda$ in the limit.
\begin{teo}\label{limrepagdyn} (Representative agent's limit dynamics)
Suppose $\ttonde{Y,\lambda}$ is an ${\mathcal{S}^\infty\times\mathrm{bmo}}-$solution to the \emph{limiting equation}
\begin{align}\label{repagentdyn}
    \begin{cases}
    \dfr {Y}\ttonde{t} &= \frac{1}{2}\lambda\ttonde{t}^2\dfr t +\lambda\ttonde{t}\dfr B\ttonde{t}, \qquad t\in [0,T)\\
    {Y}\ttonde{T}&=E\quadre{e_T\ttonde{x,W_1\ttonde{T}}}_{x=B\ttonde{T}} = E\quadre{e_T\ttonde{B\ttonde{T},W_1\ttonde{T}}\lvert B\ttonde{T}}.
    \end{cases}
\end{align}
We have that 
\begin{align*}
    \Bar{Y}_N\longrightarrow Y\qquad\text{in }\mathcal{S}^p\text{ for each }1\leq p<\infty,
\end{align*} 
and
\begin{equation}\label{convoftheZsrepagdyn}
\begin{alignedat}{5}
    &\lambda^{\ttonde{N}} &&\longrightarrow &\ \lambda \quad&\text{in }&&H^p\qquad \text{ for each }1\leq p <\infty,\\
    &{M_N}&&\longrightarrow&\ 0\quad&\text{in }&&\mathcal{H}^p \qquad\text{ for each }1\leq p <\infty,
\end{alignedat}
\end{equation}
where 
\begin{align*}
    \dfr M_N\ttonde{t} = \frac{1}{N}\sum_{i=1}^NZ^{\ttonde{i,N}}_2\ttonde{t}\dfr W_i\ttonde{t}.
\end{align*}
\end{teo}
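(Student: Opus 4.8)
The plan is to collapse the $N$-dimensional system into a single scalar quadratic BSDE for $\bar Y_N$ and then compare it with the limiting equation (\ref{repagentdyn}) by an exponential change of measure. First I would average the $N$ equations in (\ref{equilibriumNagent}). Because $\frac1N\sum_{i=1}^N Z^{\ttonde{i,N}}_1 = \lambda^{\ttonde{N}}$, the $\dfr B$-parts collapse to $\lambda^{\ttonde{N}}\dfr B$, the average of the drifts equals $\frac12\ttonde{\lambda^{\ttonde{N}}}^2 - \ttonde{\lambda^{\ttonde{N}}}^2 = -\frac12\ttonde{\lambda^{\ttonde{N}}}^2$, and the $\dfr W_i$-parts assemble into $\dfr M_N$. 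Hence $\bar Y_N$ solves
\[
    \dfr \bar Y_N\ttonde{t} = \tfrac12\ttonde{\lambda^{\ttonde{N}}\ttonde{t}}^2\dfr t + \lambda^{\ttonde{N}}\ttonde{t}\dfr B\ttonde{t} + \dfr M_N\ttonde{t}, \qquad \bar Y_N\ttonde{T} = \tfrac1N\sum_{i=1}^N e_T\ttonde{B\ttonde{T},W_i\ttonde{T}}=:\xi_N,
\]
which has exactly the structure of (\ref{repagentdyn}), but carries the extra martingale $M_N$ (orthogonal to $B$) and the empirical terminal datum $\xi_N$ in place of $\xi := E\quadre{e_T\ttonde{B\ttonde{T},W_1\ttonde{T}}\,\vert\,B\ttonde{T}}$.

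Next I would treat the terminal data. Conditionally on $\sigma\ttonde{B}$ the variables $e_T\ttonde{B\ttonde{T},W_i\ttonde{T}}$, $i\geq1$, are i.i.d. with conditional mean $\xi$, so a conditional law of large numbers applies; quantitatively, $E\quadre{\abs{\xi_N-\xi}^2} = \frac1N E\quadre{\mathrm{Var}\ttonde{e_T\ttonde{B\ttonde{T},W_1\ttonde{T}}\,\vert\,B\ttonde{T}}}\leq \norm{e_T}_\infty^2/N$. Since $\abs{\xi_N},\abs{\xi}\leq\norm{e_T}_\infty$, this upgrades to $\xi_N\to\xi$ in $L^p$ for every $1\leq p<\infty$.

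The comparison itself proceeds by linearization. Writing $\delta Y = \bar Y_N - Y$, $\delta\lambda = \lambda^{\ttonde{N}}-\lambda$ and using $\ttonde{\lambda^{\ttonde{N}}}^2-\lambda^2 = \ttonde{\lambda^{\ttonde{N}}+\lambda}\delta\lambda$, the difference satisfies
\[
    \delta Y\ttonde{t} = \ttonde{\xi_N-\xi} - \int_t^T \delta\lambda\ttonde{s}\ttonde{\beta\ttonde{s}\dfr s + \dfr B\ttonde{s}} - \int_t^T \dfr M_N\ttonde{s}, \qquad \beta := \tfrac12\ttonde{\lambda^{\ttonde{N}}+\lambda}.
\]
Defining $Q$ by $\frac{\dfr Q}{\dfr P} = \mathcal{E}\ttonde{-\int_0^\cdot\beta\dfr B}_T$, the process $\beta\dfr s + \dfr B$ becomes a $Q$-Brownian increment, while $M_N$ stays a $Q$-local martingale because it is driven by the $W_i$, which are orthogonal to $B$. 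Once the relevant integrals are seen to be true $Q$-martingales (a consequence of the uniform bmo bounds below), this gives the representation $\delta Y\ttonde{t} = E^Q\quadre{\xi_N-\xi\,\vert\,\mathcal{F}_t}$. From here Doob's inequality together with a reverse Hölder bound for the density yields $\norm{\bar Y_N - Y}_{\mathcal{S}^p}\leq C\norm{\xi_N-\xi}_{L^{p'}}\to0$, which is the first claimed convergence. For the remaining two I would note that under $Q$ the martingale $\delta Y$ has bracket $\cov{\delta Y}_T = \int_0^T\abs{\delta\lambda}^2\dfr s + \cov{M_N}_T$, so the Burkholder--Davis--Gundy inequality bounds $E^Q\quadre{\ttonde{\int_0^T\abs{\lambda^{\ttonde{N}}-\lambda}^2\dfr s}^{p/2}} + E^Q\quadre{\cov{M_N}_T^{p/2}}\leq C_pE^Q\quadre{\abs{\xi_N-\xi}^p}$; Hölder's inequality against the density then transfers these estimates to the $P$-norms $\norm{\lambda^{\ttonde{N}}-\lambda}_{H^p}$ and $\norm{M_N}_{\mathcal{H}^p}$, both of which tend to $0$.

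The crux — and the place where I expect the real work — is making this change of measure uniform in $N$: the reverse Hölder (hence $C,C_p$) constants for $\mathcal{E}\ttonde{-\int\beta\dfr B}_T$ are governed by $\norm{\beta}_{\mathrm{bmo}}$, so I must bound $\norm{\lambda^{\ttonde{N}}}_{\mathrm{bmo}}$ independently of $N$, otherwise the constants above could degenerate as $N\to\infty$. I would obtain this from the scalar equation for $\bar Y_N$. Since its driver $-\frac12\ttonde{\lambda^{\ttonde{N}}}^2$ is nonpositive, conditioning gives the uniform upper bound $\bar Y_N\leq\norm{e_T}_\infty$; for the lower bound I would absorb the drift $\lambda^{\ttonde{N}}Z^{\ttonde{i,N}}_1$ in each \emph{individual} equation of (\ref{equilibriumNagent}) by the change of measure $\dfr\hat B = \dfr B + \lambda^{\ttonde{N}}\dfr t$, under which $Y^{\ttonde{i,N}}\geq -\norm{e_T}_\infty$, so averaging yields $\bar Y_N\geq -\norm{e_T}_\infty$ as well. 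With $\norm{\bar Y_N}_{\mathcal{S}^\infty}\leq\norm{e_T}_\infty$ uniformly in hand, the standard energy estimate for the quadratic BSDE satisfied by $\bar Y_N$ delivers $\norm{\lambda^{\ttonde{N}}}_{\mathrm{bmo}}+\norm{M_N}_{\mathrm{bmo}}\leq C\ttonde{\norm{e_T}_\infty}$, uniformly in $N$, whence every reverse Hölder and BMO constant above is independent of $N$ and the three convergences follow.
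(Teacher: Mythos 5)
Your proposal is correct and follows essentially the same route as the paper's proof: averaging the system into the scalar quadratic BSDE for $\bar{Y}_N$, extracting a uniform-in-$N$ bmo bound for $\lambda^{(N)}$ from the uniform boundedness of $\bar{Y}_N$, changing measure with the half-sum $\beta=\tfrac{1}{2}(\lambda^{(N)}+\lambda)$ (the paper's $k(N)$) and combining reverse H\"older with Doob's inequality for the $\mathcal{S}^p$ convergence, then controlling the quadratic variation for the $H^p$ and $\mathcal{H}^p$ convergences. Your departures are only in execution — a quantitative conditional-variance bound on $\xi_N-\xi$ in place of the strong-LLN argument of Lemma \ref{convcfin}, a two-sided sup-bound (drift sign plus per-agent Girsanov) instead of the paper's single change of measure with $b=\tfrac{1}{2}\lambda^{(N)}$, and Burkholder--Davis--Gundy under $\mathbb{Q}$ transferred back by H\"older against the inverse density instead of the paper's It\^o-plus-energy-inequality estimate under $\mathbb{P}$ — all of which are sound given the uniform bmo bound you correctly identify as the crux.
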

\begin{oss}\label{ossexplicitsollimeq}
We have an explicit formula for the solution of the \emph{limiting equation} (\ref{repagentdyn}).
Consider the process: 
\begin{align*}
    Y\ttonde{t} = -\log\big({E_t\big[{\exp\big({-E\quadre{e_T\ttonde{x,W_1\ttonde{T}}}_{x=B\ttonde{T}}}\big)}\big]}\big),
\end{align*}
and apply It\^ o's formula to obtain 
\begin{align*}
    \dfr Y\ttonde{t}=\frac{-Z\ttonde{t}\dfr B\ttonde{t}}{E_t\quadre{\exp\big({-E\quadre{e_T\ttonde{x,W_1\ttonde{T}}}_{x=B\ttonde{T}}}\big)}} + \frac{\ttonde{Z\ttonde{t}}^2\dfr t}{2\Big( E_t\quadre{\exp\big({-E\quadre{e_T\ttonde{x,W_1\ttonde{T}}}_{x=B\ttonde{T}}}\big)} \Big)^2},
\end{align*}
where $Z$ is the unique process in $H^2$, given by the martingale representation theorem, such that
\begin{align*}
    E_t\quadre{\exp\big({-E\quadre{e_T\ttonde{x,W_1\ttonde{T}}}_{x=B\ttonde{T}}}\big)}=\int_0^tZ\ttonde{s}\dfr B\ttonde{s}.
\end{align*}
Therefore, defining
\begin{align*}
    \lambda\ttonde{t} = \frac{-Z\ttonde{t}}{E_t\quadre{\exp\big({-E\quadre{e_T\ttonde{x,W_1\ttonde{T}}}_{x=B\ttonde{T}}}\big)}},
\end{align*}
the pair $\ttonde{Y,\lambda}$ solves BSDE (\ref{repagentdyn}) and using \cite[][, {Proposition 2.1}]{BRIAND20132921} the solution belongs to $\mathcal{S}^\infty\times\mathrm{bmo}$.
\end{oss}
Let's start by proving a property of the \emph{representative agent} dynamics' terminal condition
\begin{lemma}\label{convcfin}
We have that $\mathbb{P}-$a.s.
\begin{align}\label{convcfrepag}
    \Bar{Y}_N\ttonde{T}&=\frac{1}{N}\sum_{i=1}^Ne_T\ttonde{B\ttonde{T},W_i\ttonde{T}}\to E\quadre{e_T\ttonde{B\ttonde{T},W_1\ttonde{T}}\lvert B\ttonde{T}}.
\end{align}
\end{lemma}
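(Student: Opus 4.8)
The plan is to read \eqref{convcfrepag} as a conditional strong law of large numbers, with conditioning variable $B\ttonde{T}$. First I would introduce the bounded function $h:\R\to\R$, $h\ttonde{x}:=E\quadre{e_T\ttonde{x,W_1\ttonde{T}}}$, which is continuous by dominated convergence (using the continuity and boundedness of $e_T$ and the fact that $W_1\ttonde{T}\sim\mathcal{N}\ttonde{0,T}$). Since $B$ is independent of the family $\graffe{W_n}_{n\in\N}$, the freezing (independence) lemma yields $E\quadre{e_T\ttonde{B\ttonde{T},W_1\ttonde{T}}\lvert B\ttonde{T}}=h\ttonde{B\ttonde{T}}$ $\mathbb{P}$-a.s., so the right-hand side of \eqref{convcfrepag} equals $h\ttonde{B\ttonde{T}}$ and the claim reduces to showing $\frac1N\sum_{i=1}^N e_T\ttonde{B\ttonde{T},W_i\ttonde{T}}\to h\ttonde{B\ttonde{T}}$ $\mathbb{P}$-a.s.

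The core observation is that, conditionally on $B\ttonde{T}=x$, the random variables $e_T\ttonde{x,W_i\ttonde{T}}$, $i\geq1$, are i.i.d.\ (the $W_i\ttonde{T}$ are i.i.d.\ and independent of $B$), bounded, with common mean $h\ttonde{x}$. Hence Kolmogorov's strong law of large numbers gives, for every fixed $x\in\R$,
\begin{align*}
    \mathbb{P}\Big({\tfrac1N\sum_{i=1}^N e_T\ttonde{x,W_i\ttonde{T}}\to h\ttonde{x}}\Big)=1.
\end{align*}

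The step I expect to require the most care is upgrading this pointwise-in-$x$ statement to genuine $\mathbb{P}$-almost sure convergence, because the exceptional null set coming from the strong law a priori depends on $x$ and there are uncountably many values of $x$. To deal with this I would introduce the convergence event
\begin{align*}
    A:=\graffe{\omega:\ \tfrac1N\sum_{i=1}^N e_T\ttonde{B\ttonde{T}\ttonde{\omega},W_i\ttonde{T}\ttonde{\omega}}\longrightarrow h\ttonde{B\ttonde{T}\ttonde{\omega}}},
\end{align*}
observe that $\ttonde{x,\omega}\mapsto\mathbf{1}\graffe{\tfrac1N\sum_{i=1}^N e_T\ttonde{x,W_i\ttonde{T}\ttonde{\omega}}\to h\ttonde{x}}$ is jointly measurable (the finite sums are jointly measurable since $e_T$ is continuous, $h$ is measurable, and $\limsup_N,\liminf_N$ preserve measurability), and then, since the joint law of $\ttonde{B\ttonde{T},\ttonde{W_i\ttonde{T}}_{i\geq1}}$ factorizes by independence, apply Fubini's theorem to disintegrate over the law $\mu$ of $B\ttonde{T}$:
\begin{align*}
    \mathbb{P}\ttonde{A}=\int_{\R}\mathbb{P}\Big({\tfrac1N\sum_{i=1}^N e_T\ttonde{x,W_i\ttonde{T}}\to h\ttonde{x}}\Big)\,\mu\ttonde{\dfr x}=\int_{\R}1\,\mu\ttonde{\dfr x}=1,
\end{align*}
the inner probability being $1$ for $\mu$-a.e.\ $x$ by the previous step. (Equivalently, one may argue $E\quadre{\mathbf{1}_A\lvert B\ttonde{T}}=1$ a.s.\ and take expectations.) This proves $\mathbb{P}\ttonde{A}=1$, which is exactly the assertion of the lemma.
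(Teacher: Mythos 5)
Your proof is correct and follows essentially the same route as the paper: fix $x$, apply the strong law of large numbers to the i.i.d.\ bounded variables $e_T\ttonde{x,W_i\ttonde{T}}$, then use the independence of $B\ttonde{T}$ from $\graffe{W_i\ttonde{T}}_{i\in\N}$ to transfer the pointwise statement to $x=B\ttonde{T}$ and integrate over the law of $B\ttonde{T}$. The only cosmetic difference is that the paper invokes the freezing lemma (cited as Schilling, Lemma A.3) for the conditioning step, whereas you carry out the same disintegration by hand via joint measurability and Fubini, which if anything is more explicit about the measurability of the exceptional set.
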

\begin{proof}[Proof of Lemma \ref{convcfin}]
    Define
\begin{align*}
    \phi:\R\times l^2\ttonde{\N}\to\R 
\end{align*}
\begin{align*}
    \phi\ttonde{x,\graffe{y_i}_{i\in\N}}:=\mathds{1}_{\graffe{0}}\ttonde{\gamma\ttonde{x,\graffe{y_i}_{i\in\N}}}
\end{align*}
where 
\begin{align*}
\gamma\ttonde{x,\graffe{y_i}_{i\in\N}}:=\lim_{N\to\infty} \frac{\sum_{i=1}^Ne_T\ttonde{x,y_i}}{N} -
E\quadre{e_T\ttonde{x,Y_1}}
\end{align*}
where $Y_1$ is a random variable distributed as $\mathcal{N}\ttonde{0,T}$.
Now we have that the function 
\begin{align*}
    \psi\ttonde{x}= E \phi\ttonde{x,\graffe{W_i\ttonde{T}}_{i\in\N}}=\mathbb{P}\Big({\lim_{N\to\infty} \frac{\sum_{i=1}^Ne_T\ttonde{x,W_i\ttonde{T}}}{N} =
E\quadre{e_T\ttonde{x,Y_1}}}\Big)\equiv 1
\end{align*}
for each $x\in\R$, for the strong LLN.
Thus we have 
${\psi\ttonde{B\ttonde{T}}}=1$.
We conclude that
\begin{align*}
    \mathbb{P}\Big({\lim_{N\to\infty}\frac{\sum_{i=1}^Ne_T\ttonde{B\ttonde{T},W_i\ttonde{T}}}{N} =E\quadre{e_T\ttonde{B\ttonde{T},W_1\ttonde{T}}\lvert B\ttonde{T}}\Big\lvert B\ttonde{T}}\Big)={\psi\ttonde{B\ttonde{T}}}=1
\end{align*}
where the first equality is given by \cite[][,A.3 Lemma]{schilling2012brownian}.
Finally, integrating with respect to the law of $B\ttonde{T}$ we get: 
\begin{align*}
    \mathbb{P}\Big({\lim_{N\to\infty}\frac{\sum_{i=1}^Ne_T\ttonde{B\ttonde{T},W_i\ttonde{T}}}{N} =E\quadre{e_T\ttonde{B\ttonde{T},W_1\ttonde{T}}\lvert B\ttonde{T}}}\Big)=1.
\end{align*}
\end{proof}
\begin{proof}[Proof of Theorem \ref{limrepagdyn}]
We split the proof into $4$ steps.
\begin{enumerate}[label=\textit{Step }\arabic*.,leftmargin=3.2\parindent]
\item \textit{Definition of the dynamic of the difference.}\\
To show that $\Bar{Y}_N$ converges to the solution $Y$ of (\ref{repagentdyn}) as $N\to\infty$ we consider the dynamic of the difference $\tilde{Y}\ttonde{t}:=Y\ttonde{t}-\Bar{Y}_N\ttonde{t}$,
\begin{align}\label{dyndiffsyst}
    \begin{cases}
    \dfr \Tilde{Y}\ttonde{t} &= -\frac{1}{2}\Delta^{(N)}_{\lambda}(t)\big( \Delta^{(N)}_{\lambda}(t) + 2\lambda(t) \big)\dfr t - \Delta^{(N)}_{\lambda}(t)\dfr B(t)-\dfr M_N(t), \qquad t\in [0,T)\\
    \Tilde{Y}\ttonde{T}&= E\quadre{e_T\ttonde{x,W_T}}_{x=B\ttonde{T}} - \frac{1}{N}\sum_{i=1}^Ne_T\ttonde{B\ttonde{T},W_i\ttonde{T}}.
    \end{cases}
\end{align}
Where $\Delta^{(N)}_{\lambda} = \lambda^{\ttonde{N}} - \lambda $.
The fact that the terminal condition of $\tilde{Y}$ is getting smaller and smaller as $N\to\infty$ is the key idea to prove the convergence.
Let's define 
\begin{align*}
    \ttonde{k\ttonde{N}}\ttonde{t}:=\frac{1}{2}\ttonde{\lambda\ttonde{t}+\lambda^{\ttonde{N}}\ttonde{t}},
\end{align*}
and show that there exists a

\medskip

\item\label{bmounifboundKN} \textit{bmo uniform (in $N$) bound of $k\ttonde{N}$.}\\
This boils down to show the existence of an independent (in $N$) bmo bound for $\lambda^{\ttonde{N}}$.
Let's first show that we have a bound for $\norm{\bar{Y}_N}_\infty$ independent from $N$.
A simple calculation shows that the dynamic of $\bar{Y}$ is 
\begin{align}\label{mediadeiceq}
\begin{cases}
    \dfr \Bar{Y}_N\ttonde{t} &= \frac{1}{2}\ttonde{\lambda^{\ttonde{N}}\ttonde{t}}^2\dfr t +{\lambda^{\ttonde{N}}\ttonde{t}}\dfr B\ttonde{t} + \dfr M_N\ttonde{t}, \qquad t\in [0,T)\\
    \Bar{Y}_N\ttonde{T}&=\frac{1}{N}\sum_{i=1}^Ne_T\ttonde{B\ttonde{T},W_i\ttonde{T}}\in L^\infty\ttonde{\Omega, \mathcal{F}_T, \mathbb{P}}.
    \end{cases}
\end{align}
Define $b\ttonde{t}:=\frac{1}{2}\lambda^{\ttonde{N}}\ttonde{t}$ and observe that being $\lambda^{\ttonde{N}}\in\mathrm{bmo}$ the Girsanov theorem ensures that $B^b\ttonde{t}:=B\ttonde{t}+\int_0^tb\ttonde{s}\dfr s$, $t\in\quadre{0,T}$, is a Brownian motion under the equivalent probability $\mathbb{P}^b$ such that $\frac{\dfr \mathbb{P}^b}{\dfr \mathbb{P}} = \mathcal{E}\ttonde{-\int_0^\cdot b\ttonde{s}\dfr B\ttonde{s}}_T$.
Now considering the dynamic of $\bar{Y}_N$ under $\mathbb{P}^b$ and taking the conditional expected value we get $\bar{Y}_N\ttonde{t}= E_t^{\mathbb{P}^b}\quadre{\bar{Y}_N\ttonde{T}}$, thus
\begin{align*}
    \abs{\bar{Y}_N\ttonde{t}}\leq \norm{\bar{Y}_N\ttonde{T}}_\infty\leq C
\end{align*}
where $C$ is such that $e_T\ttonde{x,y}\leq C$ for all $x,y\in\R$.
The arbitrariness of $t\in\quadre{0,T}$ gives us the result.
Consider now $\bar{Y}_N$ and take the conditional expectation under $\mathbb{P}$ to get 
\begin{align*}
    \infty>C\geq E_t\quadre{\Bar{Y}_N\ttonde{T}} \geq -\Bar{C} + \frac{1}{2}E_t\quadre{\int_t^T \lambda\ttonde{\mu^N_s}^2\dfr s} 
\end{align*}
where $\Bar{C}>{\sup_{t\in\quadre{0,T}}\abs{Y\ttonde{t}}}$ {a.s.}
We have a bound ($2\ttonde{C+\Bar{C}}$) independent from $N$ and $t$ of $E_t\quadre{\int_t^T\lambda\ttonde{\mu^N_s}^2\dfr s}$ which gives us the thesis.

\medskip

\item \textit{Convergence of} $Y$.\\
Under the measure $\mathbb{Q}^N$ such that $\frac{\dfr \mathbb{Q}^N}{\dfr \mathbb{P}}=\mathcal{E}\ttonde{-k\ttonde{N}}_T:= \mathcal{E}\ttonde{\int_0^\cdot-\ttonde{k\ttonde{N}}\ttonde{t}\dfr B\ttonde{t}}_T$ we can rewrite (\ref{dyndiffsyst}) as 
\begin{align}\label{dindiffconv}
    \begin{cases}
    \dfr \Tilde{Y}\ttonde{t} &= Z\ttonde{t}\dfr B^{\ttonde{N}}(t) - \dfr M_N\ttonde{t}, \qquad t\in [0,T)\\
    \Tilde{Y}\ttonde{T}&= E\quadre{e_T\ttonde{x,W\ttonde{T}}}_{x=B\ttonde{T}} - \frac{1}{N}\sum_{i=1}^Ne_T\ttonde{B\ttonde{T},W_i\ttonde{T}}.
    \end{cases}
\end{align}
Where $B^{\ttonde{N}}\ttonde{t}\triangleq B\ttonde{t} + \int_0^t \ttonde{k\ttonde{N}}\ttonde{s}\dfr s$ and $Z\ttonde{t}={\lambda\ttonde{t}-\lambda^{\ttonde{N}}\ttonde{t}}$.
Since $\mathcal{E}\ttonde{-k\ttonde{N}}_T$ is the Dol\' eans exponential of a bmo process we can apply the reverse Hölder inequality (we refer the reader to \cite[][,{Theorem 3.1}]{kazamaki2006continuous}), i.e.
\begin{align}\label{rhineq}
    E_\tau\quadre{\mathcal{E}\ttonde{-k\ttonde{N}}^q_T}\leq C_q\mathcal{E}\ttonde{-k\ttonde{N}}^q_\tau,\text{ for every stopping time } \tau\leq T\ a.s.
\end{align}
where $q$ and  $C_q$ depends only on the BMO bound, and they are given by: 
\begin{align}
    &\label{defLqdepBMOnorm}q:=\phi^{-1}\ttonde{\alpha},\text{ with }\phi:q\mapsto \Big({1+\frac{1}{q^2}\log\frac{2q-1}{2q-2}}\Big)^{\frac{1}{2}}-1,\text{ for }1<q<\infty \\
    &\notag C_q:=2 \Big({1-\frac{2q-2}{2q-1}\exp\big({q^2\ttonde{\alpha^2 + 2\alpha}}\big)}\Big)^{-1}.
\end{align}
Where $\alpha:=\norm{\int_0^\cdot -k\ttonde{N}_s\dfr B\ttonde{s}}_{BMO}$.
It is easy to see using the classical H\"older inequality that (\ref{rhineq}) holds for each $q'$ such that $1<q'<q$.
In particular $\mathcal{E}\ttonde{-k\ttonde{N}}_T\in L_q$.
Having a bmo bound for $-k\ttonde{N}$ independent from $N$, we conclude that the constants of the reverse Hölder inequality (\ref{rhineq}) does not depend on $N$.
Now consider a solution of (\ref{dindiffconv}) and take first the conditional expectation under $\mathbb{Q}^N$ and then the modulus to get 
\begin{align*}
    \lvert{\tilde{Y}\ttonde{t}}\lvert &\leq E_t^{\mathbb{Q}^N}\quadre{\lvert{\tilde{Y}\ttonde{T}}\lvert} = \frac{1}{E_t\quadre{\mathcal{E}\ttonde{-k\ttonde{N}}_T}}E_t\quadre{\lvert{\tilde{Y}\ttonde{T}}\lvert\mathcal{E}\ttonde{-k\ttonde{N}}_T}\\
    &\leq \mathcal{E}\ttonde{-k\ttonde{N}}_t^{-1} E_t\quadre{\mathcal{E}\ttonde{-k\ttonde{N}}^q_T}^{\frac{1}{q}}E_t\quadre{\lvert \tilde{Y}\ttonde{T}\lvert^p}^{\frac{1}{p}}\\
    &\leq C_q^{\frac{1}{q}} E_t\quadre{\lvert \tilde{Y}\ttonde{T}\lvert^p}^{\frac{1}{p}}
\end{align*}
where $\frac{1}{p}+\frac{1}{q}=1$.
Now elevating to the power of $p$ we have 
\begin{align*}
    \lvert{\tilde{Y}\ttonde{t}}\lvert^p \leq C_q^{\frac{p}{q}} E_t\quadre{\lvert \tilde{Y}\ttonde{T}\lvert^p}.
\end{align*}
Take now an arbitrary $\varepsilon>0$ and consider $\beta:=1+\frac{\varepsilon}{p}$, using Doob maximal inequality we have that
\begin{align*}
    E\quadre{\sup_{t\in\quadre{0,T}}\lvert{\tilde{Y}\ttonde{t}}\lvert^{p+\varepsilon}}\leq C E\quadre{ \sup_{t\in\quadre{0,T}}E_t\quadre{\lvert \tilde{Y}\ttonde{T}\lvert^p}^\beta}\leq CE\quadre{\lvert\tilde{Y}\ttonde{T}\lvert^{p+\varepsilon}}
\end{align*}
thus we conclude that 
\begin{align*}
    \lVert{\tilde{Y}}\lVert_{\mathcal{S}^p}\leq C_p \lVert{\tilde{Y}\ttonde{T}}\lVert_{L_p}
\end{align*}
holds true for each $p>p^*$ were $\frac{1}{p^*}+\frac{1}{q^*}=1$ and $q^*:=\phi^{-1}\ttonde{\norm{\int_0^\cdot -\ttonde{k\ttonde{N}}\ttonde{s}\dfr B\ttonde{s}}_{BMO}}$.
We conclude that
\begin{align}\label{Yvaazerodynrepag}
    \lVert \tilde{Y}\lVert_{\mathcal{S}^p}\to0\text{ for all }1\leq p<\infty.
\end{align}

\medskip

\item \textit{Convergence of the Zs.}\\
Consider a solution of (\ref{dyndiffsyst}), a direct application of It\^ o's formula provides 
\begin{align}\label{sqrtdynrepagconvthm}
    \lvert{\Tilde{Y}(0)}\lvert^2 + \int_0^T\lvert{\Delta^{(N)}_\lambda(t)}\lvert^2\dfr t + \int_0^T\dfr\cov{M_N}_t = \lvert{\Tilde{Y}(T)}\lvert^2 + 2\nu +\eta,
\end{align}
where 
\begin{align*}
    \nu=\int_0^T \Tilde{Y}(t) \big(\Delta^{(N)}_\lambda(t)\dfr B(t) + \dfr M_N(t)\big)\qquad \eta=\int_0^T \Tilde{Y}(t) \Delta^{(N)}_\lambda(t)\big( \Delta^{(N)}_\lambda(t)+2\lambda(t) \big)\dfr t.
\end{align*}
Consider $\eta$ and compute
\begin{alignat*}{2}
    \eta &\leq \int_0^T \lvert{\Tilde{Y}(t)}\lvert \big( \lvert{\Delta^{(N)}_\lambda(t)}\lvert+2\abs{\lambda(t)} \big)\lvert{\Delta^{(N)}_\lambda(t)}\lvert\dfr t\leq &&\frac{1}{2}\sup_{t\in[0,T]}\lvert \Tilde{Y}(t)\lvert^2\int_0^T \big( \lvert{\Delta^{(N)}_\lambda(t)}\lvert+2\abs{\lambda(t)} \big)^2\dfr t\\
    & &&+\frac{1}{2}\int_0^T\lvert{\Delta^{(N)}_\lambda(t)}\lvert^2\dfr t\\
\end{alignat*}
Putting this expression in (\ref{sqrtdynrepagconvthm}) we get
\begin{align*}
    \int_0^T\lvert{\Delta^{(N)}_\lambda(t)}\lvert^2\dfr t + \int_0^T\dfr\cov{M_N}_t\leq 2\lvert{\Tilde{Y}(T)}\lvert^2 + 4\nu + \sup_{t\in[0,T]}\lvert \Tilde{Y}(t)\lvert^2\int_0^T \big( \lvert{\Delta^{(N)}_\lambda(t)}\lvert+2\abs{\lambda(t)} \big)^2\dfr t.
\end{align*}
Fix an arbitrary $p>p^*$ and consider
\begin{alignat*}{2}
    &E&&\Bigg[{\Big( \int_0^T\lvert{\Delta^{(N)}_\lambda(t)}\lvert^2\dfr t + \int_0^T\dfr\cov{M_N}_t \Big)^{\frac{p}{2}}}\Bigg] \\
    &\leq  &&C_p \Bigg( \lVert{\tilde{Y}\ttonde{T}}\lVert_{L_p}^p + E\quadre{\abs{\int_0^T \Tilde{Y}(t) \big(\Delta^{(N)}_\lambda(t)\dfr B(t) + \dfr M_N(t)\big)}^{\frac{p}{2}}}\Bigg)\\
    & &&+C_p E\quadre{\sup_{t\in[0,T]}\lvert \Tilde{Y}(t)\lvert^p\Big( \int_0^T \big( \lvert{\Delta^{(N)}_\lambda(t)}\lvert+2\abs{\lambda(t)} \big)^2\dfr t  \Big)^{\frac{p}{2}}}.
\end{alignat*}
The application of Burkholder-Davis-Gundy and H\"older inequality provide
\begin{alignat*}{2}
    &\leq&& C_p \Bigg( \lVert{\tilde{Y}\ttonde{T}}\lVert_{L_p}^p + E\quadre{\sup_{t\in[0,T]}\lvert \Tilde{Y}(t)\lvert^p}^{\frac{1}{2}} E\quadre{\Big( \int_0^T\lvert{\Delta^{(N)}_\lambda(t)}\lvert^2\dfr t + \int_0^T\dfr\cov{M_N}_t \Big)^{\frac{p}{2}}}^{\frac{1}{2}}\Bigg)\\
    & &&+C_p E\quadre{\sup_{t\in[0,T]}\lvert \Tilde{Y}(t)\lvert^{2p}}^{\frac{1}{2}}E\quadre{\Big( \int_0^T \big( \lvert{\Delta^{(N)}_\lambda(t)}\lvert+2\abs{\lambda(t)} \big)^2\dfr t\Big)^p}^{\frac{1}{2}}.
\end{alignat*}
Besides, since $\Delta^{(N)}_\lambda$ and $\lambda$ belong to bmo, the energy inequality for BMO martingales, see e.g. Section 2.1 in \cite{kazamaki2006continuous}, provides 
\begin{align*}
    E&\quadre{\Big( \int_0^T \lvert{\Delta^{(N)}_\lambda(t)}\lvert^2\dfr t \Big)^p + \Big( \int_0^T 4\abs{\lambda(t)}^2 \dfr t\Big)^p }\\
    &\leq p!\Bigg( \norm{\int_0^\cdot \Delta^{(N)}_\lambda(t) \dfr B(t) }_{BMO}^{2p} + \norm{\int_0^\cdot 2\lambda(t)\dfr B(t)}_{BMO}^{2p} \Bigg)\leq L,
\end{align*}
where $L$ does not depend on $N$.
Combining the two previous expressions together with the inequality $ab\leq 2a^2 +\frac{b^2}{2}$ yields 
\begin{align*}
\frac{1}{2}\Big(\norm{\Delta^{(N)}_\lambda}_{H^p}^p + \norm{M_N}_{\mathcal{H}^p}^p\Big)&\leq\frac{1}{2} E\quadre{\Big( \int_0^T\lvert{\Delta^{(N)}_\lambda(t)}\lvert^2\dfr t + \int_0^T\dfr\cov{M_N}_t \Big)^{\frac{p}{2}}} \\
&\leq  C_p\Big( \lVert{\tilde{Y}\ttonde{T}}\lVert_{L_p}^p + \lVert \Tilde{Y} \lVert^{p}_{\mathcal{S}^p} + \lVert \Tilde{Y} \lVert^{p}_{\mathcal{S}^{2p}}\Big).
\end{align*}
Hence, the arbitrariness of $p>p^*$ together with (\ref{Yvaazerodynrepag}) leads to 
\begin{align}\label{convoftheZsrepagent}
    \norm{\Delta^{(N)}_\lambda}_{H^p}\to 0\qquad  \norm{M_N}_{\mathcal{H}^p}\to 0,\text{ for all } 1\leq p <\infty.
\end{align}
\end{enumerate}
\end{proof}

\subsection{BSDE system in the limit}
In this section we prove that as $N$ goes to infinity, conditionally on the common Brownian motion $B$, the system decouples.
We also get an explicit formula for the solution of the limit system. 
\begin{teo}
(BSDE limit system)
Suppose $\ttonde{Y,Z}$, is an $\mathcal{S}^\infty\times\mathrm{bmo}-$solution to 
\begin{align}\label{equilibriuminfagent}
    \begin{cases}
    \dfr Y\ttonde{t} &= Z_1\ttonde{t}\dfr B\ttonde{t} +Z_2\ttonde{t}\dfr W_1\ttonde{t} -\ttonde{\frac{1}{2}{\lambda^{2}\ttonde{t}} - \lambda\ttonde{t}Z_1\ttonde{t}}\dfr t,\ t\in [0,T)\\
    Y\ttonde{T}&=e_T\ttonde{B\ttonde{T},W_1\ttonde{T}}\in L^\infty\ttonde{\Omega, \mathcal{F}_T, \mathbb{P}}.
    \end{cases}
\end{align}
For a fixed $k\in\N$, we have that as $N\to\infty$
\begin{alignat}{3}\label{limitkSyst}
    \begin{aligned}
    &Y^{\ttonde{i,N}}\longrightarrow Y^{\ttonde{i}}\qquad&&\text{in }\mathcal{S}^p\text{ for each }1\leq p<\infty,\qquad &&i\in\graffe{1,\dots,k}\\
    &Z_j^{\ttonde{i,N}}\longrightarrow Z_j^{\ttonde{i}}\qquad&&\text{in }H^p\text{ for each }1\leq p<\infty,\qquad &&i\in\graffe{1,\dots,k}\qquad j=1,2
    \end{aligned}
\end{alignat}
where $({Y^{\ttonde{1}}, Z_1^{\ttonde{1}},Z_2^{\ttonde{1}}}),\dots,({Y^{
\ttonde{k}},Z_1^{\ttonde{k}},Z_2^{\ttonde{k}} })$ are independent
(conditionally on $B$) copies of the solution
of equation (\ref{equilibriuminfagent}).
\end{teo}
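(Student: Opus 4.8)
Our plan is to show that, for each fixed $i$, the triple $\ttonde{Y^{\ttonde{i,N}},Z_1^{\ttonde{i,N}},Z_2^{\ttonde{i,N}}}$ converges to the solution of the \emph{decoupled} single-agent equation (\ref{equilibriuminfagent}), in which the common-noise coefficient $\lambda$ is the \emph{same} limiting market price of risk produced in Theorem \ref{limrepagdyn}. First I would \emph{identify the limit}. For $i\geq 1$ let $\ttonde{Y^{\ttonde{i}},Z_1^{\ttonde{i}},Z_2^{\ttonde{i}}}$ denote the $\mathcal{S}^\infty\times\mathrm{bmo}$ solution of (\ref{equilibriuminfagent}) with $W_1$ replaced by $W_i$; this exists and has $i$-independent norms because the $\graffe{W_n}$ are i.i.d.\ and $\lambda$ is adapted to the filtration generated by $B$ alone. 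The latter is clear from Remark \ref{ossexplicitsollimeq}: the terminal variable $\exp\ttonde{-E\quadre{e_T\ttonde{x,W_1\ttonde{T}}}_{x=B\ttonde{T}}}$ is $\sigma\ttonde{B\ttonde{T}}$-measurable, so its martingale and the integrand $Z$ live on the $B$-filtration. Conditioning on the path of $B$ thus renders $\lambda$ deterministic, and each $\ttonde{Y^{\ttonde{i}},Z^{\ttonde{i}}}$ becomes a measurable functional of $\ttonde{B,W_i}$ only; since the $W_i$ are i.i.d.\ and independent of $B$, the family $\ttonde{Y^{\ttonde{i}},Z_1^{\ttonde{i}},Z_2^{\ttonde{i}}}_{i=1}^k$ is, conditionally on $B$, i.i.d.\ — exactly the asserted structure of the limit. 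It therefore suffices to prove the componentwise convergence $Y^{\ttonde{i,N}}\to Y^{\ttonde{i}}$ in $\mathcal{S}^p$ and $Z_j^{\ttonde{i,N}}\to Z_j^{\ttonde{i}}$ in $H^p$ for each fixed $i$.

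The second step mimics the argument of Theorem \ref{limrepagdyn} applied to the difference. Set $\delta Y^{i}:=Y^{\ttonde{i,N}}-Y^{\ttonde{i}}$, $\delta Z_j^{i}:=Z_j^{\ttonde{i,N}}-Z_j^{\ttonde{i}}$ and $\Delta^{(N)}_{\lambda}=\lambda^{\ttonde{N}}-\lambda$. The crucial simplification is that the two terminal conditions \emph{coincide}, both being $e_T\ttonde{B\ttonde{T},W_i\ttonde{T}}$, whence $\delta Y^{i}\ttonde{T}=0$. Subtracting the two drivers and rearranging exactly as in the representative-agent computation gives
\begin{align*}
\dfr\delta Y^{i}\ttonde{t}=\delta Z_1^{i}\ttonde{t}\dfr B\ttonde{t}+\delta Z_2^{i}\ttonde{t}\dfr W_i\ttonde{t}+\lambda^{\ttonde{N}}\ttonde{t}\,\delta Z_1^{i}\ttonde{t}\dfr t+g^{\ttonde{N}}\ttonde{t}\dfr t,
\end{align*}
where the \emph{source} $g^{\ttonde{N}}=-\tfrac12\Delta^{(N)}_{\lambda}\ttonde{\lambda^{\ttonde{N}}+\lambda}+\Delta^{(N)}_{\lambda}Z_1^{\ttonde{i}}$ is linear in $\Delta^{(N)}_{\lambda}$. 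Passing to the measure $\mathbb{Q}^N$ with density $\mathcal{E}\ttonde{-\int_0^\cdot\lambda^{\ttonde{N}}\dfr B}_T$, under which $B^{\ttonde{N}}:=B+\int_0^\cdot\lambda^{\ttonde{N}}\dfr s$ is a Brownian motion and $W_i$ is unchanged, absorbs the linear term and leaves $\dfr\delta Y^{i}=\delta Z_1^{i}\dfr B^{\ttonde{N}}+\delta Z_2^{i}\dfr W_i+g^{\ttonde{N}}\dfr t$.

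I would then run the two estimates of Theorem \ref{limrepagdyn} essentially verbatim. Taking $E^{\mathbb{Q}^N}_t$ and using $\delta Y^{i}\ttonde{T}=0$ yields $\delta Y^{i}\ttonde{t}=-E^{\mathbb{Q}^N}_t\quadre{\int_t^T g^{\ttonde{N}}\ttonde{s}\dfr s}$; the reverse Hölder inequality (\ref{rhineq}) — whose constants are uniform in $N$ thanks to the $N$-independent bmo bound for $\lambda^{\ttonde{N}}$ established in the proof of Theorem \ref{limrepagdyn} — together with Doob's inequality gives $\norm{\delta Y^{i}}_{\mathcal{S}^p}\leq C_p\norm{\int_0^T\abs{g^{\ttonde{N}}}\dfr s}_{L^p}$ for $p>p^*$. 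By Cauchy–Schwarz in time and Hölder in $\omega$, $\norm{\int_0^T\abs{g^{\ttonde{N}}}\dfr s}_{L^p}\leq\norm{\Delta^{(N)}_{\lambda}}_{H^{2p}}\norm{\tfrac12\abs{\lambda^{\ttonde{N}}}+\tfrac12\abs{\lambda}+\abs{Z_1^{\ttonde{i}}}}_{H^{2p}}$; the first factor tends to $0$ by (\ref{convoftheZsrepagent}), while the second is bounded uniformly in $N$ by the energy inequality and the uniform bmo bounds, so $\norm{\delta Y^{i}}_{\mathcal{S}^p}\to0$ for all $p$. For the $Z$'s I would apply It\^o to $\abs{\delta Y^{i}}^2$ under $\mathbb{P}$ as in Step $4$ there: the term $\lambda^{\ttonde{N}}\delta Z_1^{i}$ is dominated through Young's inequality so that a fraction of $\int_0^T\abs{\delta Z_1^{i}}^2\dfr s$ is absorbed on the left, the stochastic integrals are handled by Burkholder–Davis–Gundy, and the surviving factors are controlled by $\norm{\delta Y^{i}}_{\mathcal{S}^{2p}}\to0$ and the $N$-uniform energy bounds, delivering $\norm{\delta Z_j^{i}}_{H^p}\to0$.

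The main obstacle is bookkeeping the \emph{uniformity in $N$ (and $i$)} of every constant: the reverse-Hölder exponent $q^*$ and constant $C_{q^*}$, and the energy-inequality bounds on $\lambda^{\ttonde{N}},\lambda,Z_1^{\ttonde{i}}$. Uniformity in $N$ is inherited from the $N$-independent bmo bound for $\lambda^{\ttonde{N}}$ proved in Theorem \ref{limrepagdyn}; uniformity in $i$ is furnished by the exchange symmetry of the system (\ref{equilibriumNagent}), which forces the bmo norms of $Z_1^{\ttonde{i,N}}$ and of $Z_1^{\ttonde{i}}$ to be independent of $i$. Granting these, the per-agent convergence in (\ref{limitkSyst}) follows for every fixed $i$, and the identification of the limiting family as conditionally i.i.d.\ copies of the solution of (\ref{equilibriuminfagent}), carried out in the first step, completes the proof.
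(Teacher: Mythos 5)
Your proposal is correct and follows essentially the same route as the paper's proof: the same difference process with vanishing terminal condition, the same change of measure $\mathcal{E}\ttonde{-\int_0^\cdot \lambda^{\ttonde{N}}\ttonde{t}\dfr B\ttonde{t}}_T$ absorbing the $\lambda^{\ttonde{N}}\,\delta Z_1^{i}$ term, the same uniform-in-$N$ bmo bound for $\lambda^{\ttonde{N}}$ feeding a reverse H\"older plus Doob estimate for the $\mathcal{S}^p$ convergence, and the same It\^o-on-$\abs{\delta Y^{i}}^2$ argument with Young, Burkholder--Davis--Gundy and the energy inequality for the $H^p$ convergence of the $Z$'s. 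Your opening step, identifying the limit family as conditionally i.i.d.\ copies via the $B$-measurability of $\lambda$, is a point the paper's proof leaves implicit, so making it explicit is a useful (and correct) supplement rather than a deviation.
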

\begin{oss}
We interpret the limit (\ref{limitkSyst}) as saying that the agents' dynamics $Y^{\ttonde{i,N}}$ become asymptotically (conditionally on $B$) i.i.d. as $N\to\infty$.
\end{oss}
\begin{oss}
We have an explicit formula for the solution of equation (\ref{equilibriuminfagent}). 
Consider the measure $\mathbb{P}^\lambda$ such that $\frac{\dfr \mathbb{P}^\lambda}{\dfr \mathbb{P}}=\mathcal{E}({-\int_0^\cdot\lambda\ttonde{t}\dfr B\ttonde{t}})_T$ and define 
\begin{align*}
    Y\ttonde{t}:=E_t^{\mathbb{P}^\lambda}\quadre{e_T\ttonde{B\ttonde{T},W_1\ttonde{T}} + \frac{1}{2}\int_t^T\lambda\ttonde{s}^2\dfr s}.
\end{align*}
Using It\^ o's's formula it is easy to see that the unique solution to this equation is the triple $\ttonde{Y,Z_1,Z_2}$ with $Z_1,Z_2$ such that 
\begin{align*}
    \int_0^tZ_1\ttonde{s}\dfr B\ttonde{s} + \int_0^tZ_2\ttonde{s}\dfr W_1\ttonde{s} = E_t^{\mathbb{P}^\lambda}\quadre{e_T\ttonde{B\ttonde{T},W_1\ttonde{T}} + \frac{1}{2}\int_0^T\lambda\ttonde{s}^2\dfr s}.
\end{align*}
\end{oss}
\begin{proof}
Observe first that the only thing we need to prove is that for a fixed $i\in\graffe{1,\dots,k}$
\begin{alignat*}{3}
    \begin{aligned}
    &Y^{\ttonde{i,N}}\longrightarrow Y^{\ttonde{i}}\qquad&&\text{in }\mathcal{S}^p\text{ for each }1\leq p<\infty\\
    &Z_j^{\ttonde{i,N}}\longrightarrow Z_j^{\ttonde{i}}\qquad&&\text{in }H^p\text{ for each }1\leq p<\infty,\qquad j=1,2.
    \end{aligned}
\end{alignat*}
We split the proof into $4$ steps. 

\medskip

\begin{enumerate}[label=\textit{Step }\arabic*.,leftmargin=3.2\parindent]
\item \textit{Definition of the dynamic of the difference.}\\
To show that $Y^{\ttonde{i,N}}\longrightarrow Y^{\ttonde{i}}$ as $N$ goes to infinity we consider the dynamic of the difference $D\ttonde{t}:=Y^{\ttonde{i}}\ttonde{t}-Y^{\ttonde{i,N}}\ttonde{t}$,
\begin{align}\label{dyndiffBSDE}
    \begin{cases}
    \dfr D\ttonde{t} &= ({Z_1^{\ttonde{i}}\ttonde{t}-Z_1^{\ttonde{i,N}}\ttonde{t}})\dfr B\ttonde{t} + (Z^{\ttonde{i}}_2\ttonde{t}  -Z_2^{\ttonde{i,N}}\ttonde{t})\dfr W_i\ttonde{t} \\
     &\quad-\Big({\frac{1}{2}\big({\lambda^{2}\ttonde{t} - (\lambda^{\ttonde{N}}\ttonde{t})^2}\big) - \big(\lambda\ttonde{t}Z^{\ttonde{i}}_1\ttonde{t} - \lambda^{\ttonde{N}}\ttonde{t}Z^{\ttonde{i,N}}_1\ttonde{t}\big)}\Big)\dfr t,\ t\in [0,T)\\
    D\ttonde{T}&=0.
\end{cases}
\end{align}

\medskip

\item \textit{bmo uniform bound for} $\lambda^{\ttonde{N}}$.\\
We already obtained this in {\ref{bmounifboundKN}} of the proof of {Theorem \ref{limrepagdyn}}.

\medskip

\item \textit{Convergence of} $Y^{\ttonde{i,N}}$.\\
Under the measure $\mathbb{Q}^N$ such that $\frac{\dfr \mathbb{Q}^N}{\dfr \mathbb{P}}=\mathcal{E}\ttonde{\int_0^\cdot-\lambda^{\ttonde{N}}\ttonde{t}\dfr B\ttonde{t}}_T$ we can rewrite (\ref{dyndiffBSDE}) as 
\begin{align}\label{BSDEdiffsystmeaschange}
    \begin{cases}
    \dfr D\ttonde{t} &= ({Z_1^{\ttonde{i}}\ttonde{t}-Z_1^{\ttonde{i,N}}\ttonde{t}})\dfr B^{\ttonde{N}}\ttonde{t} + (Z^{\ttonde{i}}_2\ttonde{t}  -Z_2^{\ttonde{i,N}}\ttonde{t})\dfr W_i\ttonde{t} \\
     &\quad-\Big({\frac{1}{2}\big({\lambda^{2}\ttonde{t} - (\lambda^{\ttonde{N}}\ttonde{t})^2}\big) - 
     Z^{\ttonde{i}}_1\ttonde{t}\big({\lambda\ttonde{t}-\lambda^{\ttonde{N}}\ttonde{t}}\big)}\Big)\dfr t,\ t\in [0,T)\\
    D\ttonde{T}&=0.
\end{cases}
\end{align}
Where $B^{\ttonde{N}}\triangleq B\ttonde{t} + \int_0^t\lambda^{N}\ttonde{s}\dfr s$.
Applying the same reasoning as in the proof of {Theorem \ref{limrepagdyn}} we get that there exists a $C_q,q$ independent from $N$ such that 
\begin{align}
    E_\tau\quadre{\mathcal{E}\big({-\lambda^{\ttonde{N}}}\big)_T^q}\leq C_q\mathcal{E}\big({-\lambda^{\ttonde{N}}}\big)_\tau^q,\text{ for every stopping time } \tau\leq T\ a.s.
\end{align}
Where $\mathcal{E}\ttonde{-\lambda^{\ttonde{N}}}_t=\mathcal{E}\ttonde{\int_0^\cdot -\lambda^{\ttonde{N}}\ttonde{s}\dfr B\ttonde{s}}_t$.
It is easy to see using the classical H\"older inequality that this inequality holds for each $q'$ such that $1<q'<q$.
In particular $\mathcal{E}\ttonde{-k\ttonde{N}}_T\in L_q$.
Now consider a solution of (\ref{BSDEdiffsystmeaschange}) and take first the
conditional expectation under $\mathbb{Q}^N$ and then the modulus to get
\begin{align}
    \notag\abs{D\ttonde{t}}&\leq E_t^{\mathbb{Q}^N}\quadre{\int_t^T\abs{\Delta^{\ttonde{N}}_\lambda\ttonde{s}}\abs{\lambda\ttonde{s}+\lambda^{\ttonde{N}}\ttonde{s}}\dfr s} + E_t^{\mathbb{Q}^N}\quadre{\int_t^T \abs{Z_1^{\ttonde{i}}\ttonde{s}}\abs{\Delta^{\ttonde{N}}_\lambda\ttonde{s}}\dfr s}\\
    &\notag\leq E^{\mathbb{Q}^N}_t\quadre{\int_0^T\abs{\Delta^{\ttonde{N}}_\lambda\ttonde{s}}^2\dfr s}^{\frac{1}{2}} \Bigg(E_t^{\mathbb{Q}^N}\quadre{\int_t^T\abs{Z^{\ttonde{i}}_1\ttonde{s}}^2\dfr s}^{\frac{1}{2}}\\ &\notag\quad+E_t^{\mathbb{Q}^N}\quadre{\int_t^T\abs{\lambda\ttonde{s}+\lambda^{\ttonde{N}}\ttonde{s}}^2\dfr s}^{\frac{1}{2}} \Bigg)\\
    &\label{ineqbmonormequivlenceunderunifbound}\leq C E^{\mathbb{Q}^N}_t\quadre{\int_0^T\abs{\Delta^{\ttonde{N}}_\lambda\ttonde{s}}^2\dfr s}^{\frac{1}{2}}\Big({\norm{Z^{\ttonde{i}}_1}_{\mathrm{bmo}} + \norm{\lambda+\lambda^{\ttonde{N}}}_{\mathrm{bmo}} }\Big)\\
    &\notag\leq C E^{\mathbb{Q}^N}_t\quadre{\int_0^T\abs{\Delta^{\ttonde{N}}_\lambda\ttonde{s}}^2\dfr s}^{\frac{1}{2}} \\
    &\notag= \frac{C}{E_t\quadre{\mathcal{E}\ttonde{-\lambda^{\ttonde{N}}}_T}^{\frac{1}{2}}}E_t\quadre{\mathcal{E}\ttonde{-\lambda^{\ttonde{N}}}_T\int_0^T\abs{\Delta^{\ttonde{N}}_\lambda\ttonde{s}}^2\dfr s}^{\frac{1}{2}}
\end{align} where $\Delta^{\ttonde{N}}_\lambda = \lambda-\lambda^{\ttonde
 {N}} $ and inequality (\ref{ineqbmonormequivlenceunderunifbound}) is
 given by the existence of a uniform bmo bound for
 $k\ttonde{N}$ and \cite[][,{Theorem 3.6}]{kazamaki2006continuous} which gives us the existence of $A,B$ such that for each $N\in\N$
 \begin{align*}
     A\norm{\cdot}_{BMO,\mathbb{Q}^N}\leq \norm{\cdot}_{BMO,\mathbb{P}}\leq B\norm{\cdot}_{BMO,\mathbb{Q}^N},
 \end{align*}
 where $A$ and $B$ depend only on the uniform bmo bound for $k\ttonde{N}$, thus making all the the BMO norms taken with respect to the probability measure $\mathbb{Q}^N$ equivalent to the BMO norm take with respect to $\mathbb{P}$.
Now taking the square on both sides we get
\begin{align*}
    \abs{D\ttonde{t}}^2&\leq \frac{C^2}{E_t\quadre{\mathcal{E}\ttonde{-\lambda^{\ttonde{N}}}_T}}E_t\quadre{\mathcal{E}\ttonde{-\lambda^{\ttonde{N}}}_T\int_0^T\abs{\Delta^{\ttonde{N}}_\lambda\ttonde{s}}^2\dfr s}\\
    &\leq C^2\mathcal{E}\ttonde{-\lambda^{\ttonde{N}}}_t^{-1}E_t\quadre{\mathcal{E}\ttonde{-\lambda^{\ttonde{N}}}_T^q}^{\frac{1}{q}}E_t\quadre{\Big({\int_0^T\abs{\Delta^{\ttonde{N}}_\lambda\ttonde{s}}^{2}\dfr s}\Big)^p}^{\frac{1}{p}}\\
    &\leq C^2C_q^{\frac{1}{q}}E_t\quadre{\Big({\int_0^T\abs{\Delta^{\ttonde{N}}_\lambda\ttonde{s}}^{2}\dfr s}\Big)^p}^{\frac{1}{p}}
\end{align*}
where $\frac{1}{p}+\frac{1}{q}=1$.
Now elevating to the power of $p$ we have 
\begin{align*}
    \abs{D\ttonde{t}}^{2p}\leq C E_t\quadre{\Big({\int_0^T\abs{\Delta^{\ttonde{N}}_\lambda\ttonde{s}}^{2}\dfr s}\Big)^p}
\end{align*}
Take now an arbitrary $\varepsilon>0$ and consider $\beta:=1+\frac{\varepsilon}{2p}$, using Doob maximal inequality we have that
\begin{align*}
    E\quadre{\sup_{t\in\quadre{0,T}}\abs{D\ttonde{t}}^{2p+\varepsilon}}&\leq C E\quadre{\sup_{t\in\quadre{0,T}}E_t\quadre{\Big({\int_0^T\abs{\Delta^{\ttonde{N}}_\lambda\ttonde{s}}^{2}\dfr s}\Big)^p}^\beta}\\
    &\leq C E\quadre{\Big({\int_0^T\abs{\Delta^{\ttonde{N}}_\lambda\ttonde{s}}^{2}\dfr s}\Big)^{p+ \frac{\varepsilon}{2}}}
\end{align*}
thus we conclude that 
\begin{align*}
    \norm{D}_{\mathcal{S}^{2p}}\leq C_p\norm{\Delta_\lambda^{\ttonde{N}}}_{H^{2p}}
\end{align*}
holds true for each $p>p^*$ were $\frac{1}{p^*}+\frac{1}{q^*}=1$ and $q^*:=\phi^{-1}\ttonde{\norm{\int_0^\cdot -\lambda^{\ttonde{N}}\ttonde{s}\dfr B\ttonde{s}}_{BMO}}$ where $\phi$ is defined in (\ref{defLqdepBMOnorm}).
Now (\ref{convoftheZsrepagdyn}) let us conclude that
\begin{align}\label{Difftendeazero}
    \lVert D\lVert_{\mathcal{S}^p}\to0\text{ for all }1\leq p<\infty
\end{align}

\medskip

\item \textit{Convergence of $Z_j^{\ttonde{i,N}}$, j=1,2}.\\
Define $\Delta_{Z}^{\ttonde{N,j}}:=\ttonde{Z_j^{\ttonde{i}} -Z_j^{\ttonde{i,N}} }$ for $j=1,2$.
Consider a solution of (\ref{BSDEdiffsystmeaschange}) and the dynamic of its square: 
\begin{align}\label{sqrtdynofdiffBSDEsystlimit}
    \dfr \abs{D\ttonde{t}}^2 = 2D\ttonde{t}\dfr D\ttonde{t} +\dfr  \cov{D}_t = 2D\ttonde{t}\dfr M\ttonde{t} - 2D\ttonde{t}\chi\ttonde{t}\dfr t + \dfr \cov{D}_t
\end{align}
where 
\begin{align*}
    \dfr M\ttonde{t}&=(\Delta_{Z}^{\ttonde{N,1}}(t))\dfr B\ttonde{t} + (\Delta_{Z}^{\ttonde{N,2}}(t))\dfr W_i\ttonde{t} \\
    \chi\ttonde{t}&=\Big({\frac{1}{2} \Delta^{(N)}_\lambda(t)\big({\lambda\ttonde{t} + \lambda^{\ttonde{N}}\ttonde{t}}\big) - \big( \Delta^{(N)}_\lambda(t)Z^{(i)}_1(t) +\lambda^{(N)}(t)\Delta^{(N,1)}_{Z}(t) \big) }\Big).
\end{align*}
Consider the integral form of (\ref{sqrtdynofdiffBSDEsystlimit})
\begin{align}\label{sqrtdynofdiffBSDEsystlimitintform}
    \abs{D\ttonde{0}}^2 + \int_0^T\dfr\cov{D}_t = -\int_0^T2D\ttonde{s}\dfr M\ttonde{s} + \int_0^T2D\ttonde{s}\chi\ttonde{s}\dfr s =: \iota + \zeta 
\end{align}
Focusing on $\zeta$ we compute
\begin{alignat*}{2}
    \zeta & \leq && 2\int_0^T\lvert D\ttonde{t} \lvert \big( \lvert \lambda(t) \lvert + \lvert \lambda^{(N)}(t)\lvert + \lvert Z^{(i)}_1(t) \lvert \big)\lvert \Delta^{(N)}_\lambda(t) \lvert \dfr t + 2 \int_0^T \lvert D\ttonde{t} \lvert \lvert \lambda^{(N)}(t)\lvert \lvert\Delta^{(N,1)}_{Z}(t) \lvert \dfr t\\
    &\leq && 2\sup_{t\in[0,T]}\abs{D(t)}^2 \Big( \int_0^T \big( \lvert \lambda(t) \lvert + \lvert \lambda^{(N)}(t)\lvert + \lvert Z^{(i)}_1(t) \lvert \big)^2 + \lvert \lambda^{(N)}(t)\lvert^2 \dfr t \Big) + \frac{1}{2}\int_0^T \lvert \Delta^{(N)}_\lambda(t) \lvert^2 \dfr t  \\
    & &&+\frac{1}{2}\int_0^T \lvert\Delta^{(N,1)}_{Z}(t) \lvert^2 \dfr t.
\end{alignat*}
Putting this expression in (\ref{sqrtdynofdiffBSDEsystlimitintform}), we get 
\begin{alignat*}{1}
    &\int_0^T \big(\big({\Delta_Z^{\ttonde{N,1}}\ttonde{t}}\big)^2 +  \big({\Delta_Z^{\ttonde{N,2}}\ttonde{t}}\big)^2\big)\dfr t \\
    &\leq 2\abs{\iota} + 4\sup_{t\in[0,T]}\abs{D(t)}^2 \Big( \int_0^T \big( \lvert \lambda(t) \lvert + \lvert \lambda^{(N)}(t)\lvert + \lvert Z^{(i)}_1(t) \lvert \big)^2 + \lvert \lambda^{(N)}(t)\lvert^2 \dfr t \Big)  + \int_0^T \lvert \Delta^{(N)}_\lambda(t) \lvert^2 \dfr t.
\end{alignat*}
Fix an arbitrary $p>p^*$ and consider 
\begin{alignat*}{2}
    &E&&\Big[{\Big(\int_0^T\big( \big({\Delta_Z^{\ttonde{N,1}}\ttonde{t}}\big)^2 +  \big({\Delta_Z^{\ttonde{N,2}}\ttonde{t}}\big)^2\big)\dfr t \Big)^{\frac{p}{2}}}\Big]  \\
    &\leq &&C_p E\quadre{\abs{\iota}^{\frac{p}{2}}} + C_pE\quadre{\sup_{t\in[0,T]}\abs{D(t)}^p\Big( \int_0^T \big( \lvert \lambda(t) \lvert + \lvert \lambda^{(N)}(t)\lvert + \lvert Z^{(i)}_1(t) \lvert \big)^2 + \lvert \lambda^{(N)}(t)\lvert^2 \dfr t \Big)^{\frac{p}{2}}} \\
    & &&+ C_p\lVert \Delta^{(N)}_\lambda\lVert_{H^p}^p\\
    &\leq && C_p E\quadre{\abs{\iota}^{\frac{p}{2}}} + C_p\norm{D}^p_{\mathcal{S}^{2p}} E\quadre{\Big( \int_0^T \big( \lvert \lambda(t) \lvert + \lvert \lambda^{(N)}(t)\lvert + \lvert Z^{(i)}_1(t) \lvert \big)^2 + \lvert \lambda^{(N)}(t)\lvert^2 \dfr t \Big)^p}^{\frac{1}{2}} \\
    & &&+ C_p\lVert \Delta^{(N)}_\lambda\lVert_{H^p}^p.
\end{alignat*}
Since $\lambda$, $\lambda^{(N)}$ and $Z^{(i)}_1$ belongs to the bmo space, we can apply the energy inequality for BMO martingales as in the proof of Theorem \ref{limrepagdyn} and bound the previous expression with
\begin{align*}
    C\Big( E\quadre{\abs{\iota}^{\frac{p}{2}}} + \norm{D}^p_{\mathcal{S}^{2p}} + \lVert \Delta^{(N)}_\lambda\lVert_{H^p}^p \Big),
\end{align*}
where $C$ is independent from $N$.
Consider now the first addendum of the previous expression, H\"older and Burkholder-Davis-Gundy inequality provide
\begin{align*}
    CE\quadre{\abs{\iota}^{\frac{p}{2}}}&\leq C E\quadre{\sup_{t\in[0,T]}\abs{D(t)}^p}^{\frac{1}{2}}E\Big[{\Big(\int_0^T\big( \big({\Delta_Z^{\ttonde{N,1}}\ttonde{t}}\big)^2 +  \big({\Delta_Z^{\ttonde{N,2}}\ttonde{t}}\big)^2\big)\dfr t \Big)^{\frac{p}{2}}}\Big]^{\frac{1}{2}}\\
    &\leq 2C^2\norm{D}_{\mathcal{S}^p}^p + \frac{1}{2}E\Big[{\Big(\int_0^T\big( \big({\Delta_Z^{\ttonde{N,1}}\ttonde{t}}\big)^2 +  \big({\Delta_Z^{\ttonde{N,2}}\ttonde{t}}\big)^2\big)\dfr t \Big)^{\frac{p}{2}}}\Big],
\end{align*}
for a possibly larger constant $C$, observe that in the last inequality we have used the inequality $ab<2a^2 + \frac{b^2}{2}$.
Putting all the estimates together we conclude that 
\begin{align*}
    \frac{1}{2}\big( \lVert{\Delta_Z^{(N,1)}}\lVert_{H^p}^p +  \lVert{\Delta_Z^{(N,2)}}\lVert_{H^p}^p \big)&\leq \frac{1}{2} E\Big[{\Big(\int_0^T\big( \big({\Delta_Z^{\ttonde{N,1}}\ttonde{t}}\big)^2 +  \big({\Delta_Z^{\ttonde{N,2}}\ttonde{t}}\big)^2\big)\dfr t \Big)^{\frac{p}{2}}}\Big]  \\
    &\leq C \big( \norm{D}_{\mathcal{S}^p}^p +\norm{D}^p_{\mathcal{S}^{2p}} + \lVert \Delta^{(N)}_\lambda\lVert_{H^p}^p\big).
\end{align*}
Hence, the arbitrariness of $p>p^*$ together with (\ref{Difftendeazero}) and (\ref{convoftheZsrepagent}) leads to 
\begin{align*}
    \lVert{\Delta_Z^{(N,1)}}\lVert_{H^p}\to 0\qquad  \lVert{\Delta_Z^{(N,2)}}\lVert_{H^p}\to 0,\text{ for all } 1\leq p <\infty, \text{ as }N\to\infty
\end{align*}
\end{enumerate}
\end{proof}

\subsection{Existence of an equilibrium}
In this section we show that equation (\ref{equilibriumNagent}) has a unique solution in the class of bounded continuous solution.
\begin{defn}\label{defsolmark}
(A Markovian solution to BSDE).
Given borel function $\boldsymbol{f}:\R^{N\times d}\to\R^N$ and $\boldsymbol{g}:\R^{d}\to\R^N$, a pair $\ttonde{\boldsymbol{v},\boldsymbol{w}}$ of Borel functions with the domain $\quadre{0,T}\times\R^d$ and co-domains $\R^N$ and $\R^{N\times d}$, respectively, is called a \textbf{Markovian solution} to the \textbf{system}
\begin{align}\label{defmarksol}
    \dfr\boldsymbol{Y}\ttonde{t} = \boldsymbol{Z}\ttonde{t}\dfr \boldsymbol{W}\ttonde{t} - \boldsymbol{f}\ttonde{\boldsymbol{Z}\ttonde{t}}\dfr t,\qquad \boldsymbol{Y}\ttonde{T} = \boldsymbol{g}\ttonde{\boldsymbol{W}\ttonde{T}},
\end{align}
of \textbf{backward stochastic differential equations} if
\begin{enumerate}
    \item $\boldsymbol{Y}:= \boldsymbol{v}\ttonde{\cdot, \boldsymbol{W}}$ is a continuous process, $\boldsymbol{Z}:= \boldsymbol{w}\ttonde{\cdot, \boldsymbol{W}}\in\mathcal{P}^2$, and $\boldsymbol{f}\ttonde{\boldsymbol{Z}}\in\mathcal{P}^1$,
    \item For all $t\in\quadre{0,T}$, we have 
    \begin{align*}
        \boldsymbol{Y}\ttonde{t} = \boldsymbol{g}\ttonde{\boldsymbol{W}\ttonde{T}} + \int_{t}^T \boldsymbol{f}\ttonde{\boldsymbol{Z}\ttonde{u}}\dfr u - \int_{t}^T\boldsymbol{Z}\ttonde{u}\dfr \boldsymbol{W}_u,\ \mathrm{a.s.}
    \end{align*}
\end{enumerate}
A Markovian solution $\ttonde{\boldsymbol{v},\boldsymbol{w}}$ to (\ref{defmarksol}) is said to be \textbf{bounded} if $\boldsymbol{v}$ is bounded, \textbf{continuous} if $\boldsymbol{v}$ is continuous and a \textbf{bmo-solution} if $\boldsymbol{Z}\in\mathrm{bmo}$.
\end{defn}
Let's now state technical conditions useful to prove the existence of a
solution to systems as in (\ref{defmarksol}).
We will define them as \say{simplified}, since we are using a less general version of the analogous conditions in \cite{XinZit18}, we refer the reader to this paper for the general version.
\begin{defn}
(The simplified Bensoussan-Frehse (sBF) condition).
We say that a continuous function $\boldsymbol{f}:\R^{N\times d}\to\R^N$ satisfies the \textbf{condition} \textbf{(}${\boldsymbol{\mathrm{sBF}}}$\textbf{)} if it admits a decomposition of the form 
\begin{align}
    \boldsymbol{f}\ttonde{\boldsymbol{z}} = \mathrm{diag}\ttonde{\boldsymbol{z}\boldsymbol{I}\ttonde{\boldsymbol{z}}} + \boldsymbol{q}\ttonde{\boldsymbol{z}},
\end{align}
such that the function $\boldsymbol{I}:\R^{N\times d}\to\R^{N\times d}$ and $\boldsymbol{q}:\R^{N\times d}\to\R^N$ have the following property: there exists a positive constant ${C}$ such that for all $\boldsymbol{z}\in\R^{N\times d}$ we have
\begin{alignat*}{3}
    &\abs{\boldsymbol{I}\ttonde{\boldsymbol{z}}}&&\leq C\ttonde{ 1 + \abs{\boldsymbol{z}}} &&{\text{(quadratic-linear)}}\\
    &\abs{\boldsymbol{q}^i\ttonde{\boldsymbol{z}}}&&\leq C\bigg({ 1 + \sum_{j=1}^i\abs{\boldsymbol{z}^j}^2}\bigg),\qquad i=1,\dots,N,\qquad &&\text{(quadratic-triangular)}
\end{alignat*}
In that case, we write $\boldsymbol{f}\in\boldsymbol{\mathrm{sBF}}$(${C}$)
\end{defn}
Another ingredient necessary to guarantee the existence of a solution to (\ref{defmarksol}) is a-priori boundedness.
We remind the reader that a set of non-zero vectors $\boldsymbol{a}_1,\dots,\boldsymbol{a}_K$ in $\R^N$ (with $K>N$) is said to \textbf{positively span} $\R^N$, if, for each $\boldsymbol{a}\in\R^N$ there exists nonnegative constants $\lambda_1,\dots,\lambda_K$ such that 
\begin{align*}
    \lambda_1\boldsymbol{a}_1 + \dots + \lambda_K\boldsymbol{a}_K = \boldsymbol{a}.
\end{align*}
The following two well-known characterization, presented here for reader convenience, make positively spanning sets easy to spot: ($1$) Non-zero vectors $\boldsymbol{a}_1,\dots, \boldsymbol{a}_K$ positively span $\R^N$ if for every $\boldsymbol{a}\in\R^N\setminus\graffe{\boldsymbol{0}}$ there exists $k\in\graffe{1,\dots,K}$ such that $\boldsymbol{a}^T\boldsymbol{a}_k>0$.
($2$) If non-zero vectors $\boldsymbol{a}_1,\dots,\boldsymbol{a}_k$ already span $\R^N$, then they positively span $\R^N$ if $\boldsymbol{0}$ admits a nontrivial positive representation, i.e., if there exist nonnegative $\lambda_1,\dots,\lambda_K$, not all $0$, such that $\lambda_1\boldsymbol{a}_1+\dots+\lambda_K\boldsymbol{a}_K=\boldsymbol{0}$.
\begin{defn}
(The simplified a-priori boundedness (sAB) condition).
We say that $\boldsymbol{f}$ satisfies the \textbf{condition (sAB)} if there exist a set $\boldsymbol{a}_1,\dots,\boldsymbol{a}_K$ which positively spans $\R^N$, such that
\begin{align}
    \boldsymbol{a}^T_k\boldsymbol{f}\ttonde{\boldsymbol{z}}\leq \frac{1}{2}\abs{\boldsymbol{a}^T_k\boldsymbol{z}}^2,\qquad\text{for all}\ \boldsymbol{z}\ \text{and }k=1,\dots,K.
\end{align}
\end{defn}
\begin{oss}
Condition (sAB) is invariant under linear invertible transformation of $\R^N$.
\end{oss}
\begin{teo}\label{solexist}
The system (\ref{equilibriumNagent}) admits a unique ${\mathcal{S}^\infty\times\mathrm{bmo}}-$solution of the form 
\begin{align*}
    \boldsymbol{Y}:= \boldsymbol{v}\ttonde{\cdot, \boldsymbol{W}},\qquad \boldsymbol{Z}:= \boldsymbol{w}\ttonde{\cdot, \boldsymbol{W}}
\end{align*}
with $\ttonde{t,x}\in\quadre{0,T}\times\R^{N+1}$, and $\ttonde{\boldsymbol{v},\boldsymbol{w}}$ is a \emph{\textbf{bounded continuous bmo-solution}} as in \emph{{Definition \ref{defsolmark}}}.
\end{teo}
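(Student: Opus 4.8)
The plan is to derive Theorem \ref{solexist} from the existence-and-uniqueness theorem for Markovian systems of quadratic BSDEs in \cite{XinZit18}, which yields a unique bounded, continuous, bmo Markovian solution of a system of the form (\ref{defmarksol}) as soon as the driver satisfies both (sBF) and (sAB) and the terminal datum is bounded. The datum $\boldsymbol{g}=\boldsymbol{e}_T$ is bounded and continuous because $e_T$ is bounded and $\alpha$-Hölder, so everything reduces to verifying (sBF) and (sAB). With the sign convention of Definition \ref{defsolmark}, namely $\dfr\boldsymbol{Y}=\boldsymbol{Z}\dfr\boldsymbol{W}-\boldsymbol{f}(\boldsymbol{Z})\dfr t$, the driver read off from Remark \ref{formavecdelsystNagent} is
\begin{align*}
    \boldsymbol{f}^i(\boldsymbol{z})=\tfrac12\boldsymbol{\lambda}(\boldsymbol{z})^2-\boldsymbol{\lambda}(\boldsymbol{z})\,\boldsymbol{z}_{i1},\qquad \boldsymbol{\lambda}(\boldsymbol{z})=\tfrac1N\sum_{j=1}^N\boldsymbol{z}_{j1},
\end{align*}
which depends on $\boldsymbol{z}$ only through its first column.

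I would check (sAB) first, directly in the original coordinates, with the positively spanning set $\graffe{\boldsymbol{1},-\boldsymbol{u}_1,\dots,-\boldsymbol{u}_N}$, where $\boldsymbol{1}=(1,\dots,1)^T$ and the $\boldsymbol{u}_k$ are the standard basis vectors of $\R^N$. Positive spanning is immediate from characterization (1): a vector with a strictly negative coordinate pairs positively with the corresponding $-\boldsymbol{u}_k$, and a nonzero vector with all coordinates nonnegative pairs positively with $\boldsymbol{1}$. Writing $\lambda=\boldsymbol{\lambda}(\boldsymbol{z})$ and bounding $\abs{\boldsymbol{a}^T\boldsymbol{z}}^2\geq\abs{(\boldsymbol{a}^T\boldsymbol{z})_1}^2$ by the first-column contribution, the choice $\boldsymbol{a}=-\boldsymbol{u}_k$ gives
\begin{align*}
    \boldsymbol{a}^T\boldsymbol{f}(\boldsymbol{z})=-\tfrac12\lambda^2+\lambda\boldsymbol{z}_{k1}\leq\tfrac12\boldsymbol{z}_{k1}^2\leq\tfrac12\abs{\boldsymbol{a}^T\boldsymbol{z}}^2,
\end{align*}
the first inequality being $\frac12(\boldsymbol{z}_{k1}-\lambda)^2\geq0$, while $\boldsymbol{a}=\boldsymbol{1}$ gives $\boldsymbol{1}^T\boldsymbol{f}(\boldsymbol{z})=-\frac N2\lambda^2\leq0$. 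Hence (sAB) holds.

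The delicate point is (sBF). The coupling term $\frac12\lambda^2$ involves all $N$ rows, so in the original coordinates any decomposition $\boldsymbol{f}=\mathrm{diag}(\boldsymbol{z}\boldsymbol{I}(\boldsymbol{z}))+\boldsymbol{q}(\boldsymbol{z})$ is forced to put a $\frac12\lambda^2$ into $\boldsymbol{q}^1$, which cannot be dominated by $C(1+\abs{\boldsymbol{z}^1}^2)$, so the quadratic-triangular bound fails. I would remove this obstruction with the representative-agent change of variables $\boldsymbol{Y}\mapsto A\boldsymbol{Y}$, where $A$ is the invertible linear map whose first row is $\frac1N\boldsymbol{1}^T$ and whose $i$-th row ($i\geq2$) is $\boldsymbol{u}_i^T-\boldsymbol{u}_1^T$; the new unknowns are the representative agent $\bar{Y}_N$ and the spreads $Y^{(i,N)}-Y^{(1,N)}$. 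Since (sAB) is invariant under invertible linear transformations, it is inherited by the transformed driver $\tilde{\boldsymbol{f}}(\tilde{\boldsymbol{z}})=A\boldsymbol{f}(A^{-1}\tilde{\boldsymbol{z}})$, and it remains only to verify (sBF) in the new coordinates. Using $\tilde{\boldsymbol{z}}_{11}=\lambda$ and $\tilde{\boldsymbol{z}}_{i1}=\boldsymbol{z}_{i1}-\boldsymbol{z}_{11}$, one computes
\begin{align*}
    \tilde{\boldsymbol{f}}^1(\tilde{\boldsymbol{z}})=-\tfrac12\tilde{\boldsymbol{z}}_{11}^2,\qquad \tilde{\boldsymbol{f}}^i(\tilde{\boldsymbol{z}})=-\tilde{\boldsymbol{z}}_{11}\,\tilde{\boldsymbol{z}}_{i1}\quad(i\geq2),
\end{align*}
so the quadratic nonlinearity has become purely diagonal: taking $\tilde{\boldsymbol{q}}\equiv0$ and $\tilde{\boldsymbol{I}}$ with rows $\tilde{\boldsymbol{I}}^1=(-\frac12\tilde{\boldsymbol{z}}_{11},0,\dots,0)$ and $\tilde{\boldsymbol{I}}^i=(-\tilde{\boldsymbol{z}}_{11},0,\dots,0)$ realizes $\tilde{\boldsymbol{f}}=\mathrm{diag}(\tilde{\boldsymbol{z}}\,\tilde{\boldsymbol{I}}(\tilde{\boldsymbol{z}}))$ with $\abs{\tilde{\boldsymbol{I}}(\tilde{\boldsymbol{z}})}\leq\abs{\tilde{\boldsymbol{z}}}$, so that $\tilde{\boldsymbol{f}}\in\boldsymbol{\mathrm{sBF}}(C)$, the triangular bound holding trivially.

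With both conditions in force for the transformed system, \cite{XinZit18} supplies a unique bounded, continuous, bmo Markovian solution $(\tilde{\boldsymbol{v}},\tilde{\boldsymbol{w}})$, and then $(\boldsymbol{v},\boldsymbol{w})=(A^{-1}\tilde{\boldsymbol{v}},A^{-1}\tilde{\boldsymbol{w}})$ solves the original system: boundedness and continuity survive the linear map $A^{-1}$, and each entry of $\boldsymbol{w}$ is a finite linear combination of bmo processes, hence bmo; the components $Z^{(i,N)}_1,Z^{(i,N)}_2$ are read off as the first and $(i+1)$-st columns of the $i$-th row of $\boldsymbol{w}$. Uniqueness transfers the same way, since $A$ carries any $\mathcal{S}^\infty\times\mathrm{bmo}$ solution of (\ref{equilibriumNagent}) into a solution of the transformed system. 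I expect the (sBF) verification to be the main obstacle: the mean-field coupling through $\lambda^2$ ruins the triangular structure in the natural coordinates and is precisely undone by passing to the representative agent and the spreads, which is the one genuinely nonroutine idea in the argument.
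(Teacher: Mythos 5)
Your proposal is correct, and it reaches Theorem \ref{solexist} by the same overall route as the paper: verify (sBF) and (sAB) after a linear change of coordinates, then invoke \cite{XinZit18} via Theorem \ref{exandunsolsyst}. However, both of your verifications differ from the paper's in substance. For (sAB), the paper first passes to the variables $\tilde{Y}^{\ttonde{i}}=Y^{\ttonde{i,N}}+\bar{Y}_N$, $\tilde{Y}^{\ttonde{N}}=\bar{Y}_N$ and checks the inequality there for the set $\ttonde{-\mathfrak{e}_1,\dots,-\mathfrak{e}_N,\boldsymbol{a}_{N+1}}$, relying on the invariance remark; you check it directly in the original coordinates with the spanning set $\graffe{\boldsymbol{1},-\boldsymbol{u}_1,\dots,-\boldsymbol{u}_N}$, using the same elementary inequality $\frac{1}{2}\ttonde{\boldsymbol{z}_{k1}-\lambda}^2\geq 0$, which is simpler since no transformation is needed at that stage. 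For (sBF), the paper uses the spreads against the last agent, $\bar{Y}^{\ttonde{i}}=Y^{\ttonde{i,N}}-Y^{\ttonde{N,N}}$, $\bar{Y}^{\ttonde{N}}=Y^{\ttonde{N,N}}$; in those coordinates the market price of risk is still a linear combination $\gamma\ttonde{\boldsymbol{\bar{z}}}$ of all first-column entries, so the term $\frac{1}{2}\gamma^2$ must be placed in the component $\boldsymbol{q}^N$ and the quadratic-triangular bound is used in an essential way (it matters that this term sits in the \emph{last} slot). Your transformation instead makes the representative agent the \emph{first} coordinate, so that $\lambda=\tilde{\boldsymbol{z}}_{11}$ becomes an entry of the transformed matrix itself; the whole driver is then genuinely diagonal, $\tilde{\boldsymbol{q}}\equiv 0$, and the triangular condition holds vacuously. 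Both transformations do the same job, but yours makes the (sBF) verification essentially trivial, which is a real simplification.

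One omission should be repaired. The uniqueness half of Theorem \ref{exandunsolsyst} does not follow from (sBF)$+$(sAB) alone, as your opening sentence suggests: it additionally requires $\boldsymbol{f}\ttonde{0}$ bounded and the estimate $\abs{\boldsymbol{f}\ttonde{\boldsymbol{z}}-\boldsymbol{f}\ttonde{\boldsymbol{z}'}}\leq M\abs{\boldsymbol{z}-\boldsymbol{z}'}\ttonde{\abs{\boldsymbol{z}}+\abs{\boldsymbol{z}'}}$, which the paper verifies explicitly for its $\boldsymbol{\bar{f}}$ and which you never state for your $\tilde{\boldsymbol{f}}$. For your driver the check is immediate: $\tilde{\boldsymbol{f}}\ttonde{0}=0$, and for instance $\abs{\tilde{\boldsymbol{z}}_{11}\tilde{\boldsymbol{z}}_{i1}-\tilde{\boldsymbol{z}}'_{11}\tilde{\boldsymbol{z}}'_{i1}}\leq\abs{\tilde{\boldsymbol{z}}_{11}}\abs{\tilde{\boldsymbol{z}}_{i1}-\tilde{\boldsymbol{z}}'_{i1}}+\abs{\tilde{\boldsymbol{z}}'_{i1}}\abs{\tilde{\boldsymbol{z}}_{11}-\tilde{\boldsymbol{z}}'_{11}}$, so nothing fails; but without this step the appeal to uniqueness, and its transfer back through $A^{-1}$, is incomplete.
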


\begin{proof}
We will rewrite \cite[][{Theorem 2.14}]{XinZit18} adapting its statement to this context as follows
\begin{teo}\label{exandunsolsyst}
(Existence under (sBF)$+$(sAB))
Suppose that $\boldsymbol{f}$ satisfies conditions \emph{(sBF)} and \emph{(sAB)} and that $\boldsymbol{g}$ is bounded and $\alpha-$Hölder continuous for some $\alpha>0$.
Then the system \emph{(\ref{defmarksol})} admit a bounded continuous bmo-solution $\ttonde{\boldsymbol{v},\boldsymbol{w}}$.\\
(Uniqueness under (sBF)$+$(sAB)) Suppose that $\boldsymbol{f}\ttonde{0}$ is bounded, and there exists a constant $M$ such that $\abs{\boldsymbol{{f}}\ttonde{\boldsymbol{z}}-\boldsymbol{{f}}\ttonde{\boldsymbol{z}'}}\leq M \lvert \boldsymbol{z}- \boldsymbol{z}' \lvert\ttonde{\lvert \boldsymbol{z}\lvert + \lvert \boldsymbol{z}'\lvert }$ for all $\boldsymbol{z},\boldsymbol{z}'\in\R^{N\times d}$.
Then the solution $\ttonde{\boldsymbol{v},\boldsymbol{w}}$ is unique in the class of bounded continuous solutions. 
\end{teo}
With the same notation of \textit{Remark \ref{formavecdelsystNagent}} we can write the system in (\ref{equilibriumNagent}) as
\begin{align*}
    \dfr\boldsymbol{Y}\ttonde{t} = \boldsymbol{z}\ttonde{t}\dfr \boldsymbol{W}\ttonde{t} + \boldsymbol{f}\ttonde{\boldsymbol{z}\ttonde{t}}\dfr t,\qquad \boldsymbol{Y}\ttonde{T} = \boldsymbol{e}_T.
\end{align*}
To verify condition (sAB) let us introduce an invertible linear transformation on $\R^N$ via 
\begin{align*}
    \Tilde{Y}^{\ttonde{i}} =  {Y}^{\ttonde{i,N}} + \Bar{Y}_N,\qquad i=1,\dots,N-1,\qquad \Tilde{Y}^{\ttonde{N}} = \Bar{Y}_N,
\end{align*}
define also 
\begin{align*}
    \boldsymbol{\tilde{z}} = \begin{pmatrix}
\boldsymbol{z}_{11} + \boldsymbol{\lambda}\ttonde{\boldsymbol{z}}& \boldsymbol{z}_{12} + \frac{1}{N} \boldsymbol{z}_{12}& \frac{1}{N} \boldsymbol{z}_{23} &\dots& \frac{1}{N} \boldsymbol{z}_{N N+1} \\
\boldsymbol{z}_{21} + \boldsymbol{\lambda}\ttonde{\boldsymbol{z}}& \frac{1}{N} \boldsymbol{z}_{12}&\boldsymbol{z}_{23} +\frac{1}{N} \boldsymbol{z}_{23} &\dots& \frac{1}{N} \boldsymbol{z}_{N N+1}\\
\vdots\\
\boldsymbol{\lambda}\ttonde{\boldsymbol{z}}& \frac{1}{N} \boldsymbol{z}_{12}& \frac{1}{N} \boldsymbol{z}_{23} &\dots &  \frac{1}{N} \boldsymbol{z}_{N N+1}
\end{pmatrix}
\end{align*}
and
\begin{align*}
   \ttonde{\boldsymbol{\tilde{e}}_T}^T:=\ttonde{\boldsymbol{e}_T^{\ttonde{1}} + \frac{1}{N}\sum_{i=1}^N\boldsymbol{e}_T^{\ttonde{i}},\boldsymbol{e}_T^{\ttonde{2}} + \frac{1}{N}\sum_{i=1}^N\boldsymbol{e}_T^{\ttonde{i}},\dots,\boldsymbol{e}_T^{\ttonde{N}} + \frac{1}{N}\sum_{i=1}^N\boldsymbol{e}_T^{\ttonde{i}} }
\end{align*}
A simple calculation reveals the dynamics of $\boldsymbol{\tilde{Y}}$ to be  
\begin{align*}
    \dfr\boldsymbol{\tilde{Y}}\ttonde{t} = \boldsymbol{\tilde{z}}\ttonde{t}\dfr \boldsymbol{W}\ttonde{t} + \boldsymbol{\tilde{f}}\ttonde{\boldsymbol{\tilde{z}}\ttonde{t}}\dfr t,\qquad \boldsymbol{\tilde{Y}}\ttonde{T} = \boldsymbol{\tilde{e}}_T,
\end{align*}
where 
\begin{align*}
    \boldsymbol{\tilde{f}}\ttonde{\boldsymbol{\tilde{z}}} = \begin{pmatrix}
    \boldsymbol{\tilde{z}}_{N1} \ttonde{\boldsymbol{\tilde{z}}_{11}-\boldsymbol{\tilde{z}}_{N1}}\\
    \boldsymbol{\tilde{z}}_{N1} \ttonde{\boldsymbol{\tilde{z}}_{21}-\boldsymbol{\tilde{z}}_{N1}}\\
    \vdots\\
    \frac{1}{2}\boldsymbol{\tilde{z}}_{N1}^2\\
\end{pmatrix}.
\end{align*}
Now let $\ttonde{\mathfrak{e}_1,\dots,\mathfrak{e}_N}$ be the standard Euclidean basis of $\R^N$, and $\boldsymbol{a}_{N+1} = \ttonde{\frac{1}{N},\dots,\frac{1}{N}}$. 
Then the set $\ttonde{-\mathfrak{e}_1,\dots,-\mathfrak{e}_N,\boldsymbol{a}_{N+1}}$ positively spans $\R^N$.
Moreover 
\begin{alignat*}{2}
    &-\mathfrak{e}_k^T\boldsymbol{\tilde{f}}\ttonde{\boldsymbol{z}}&&\leq \frac{1}{2}\abs{\boldsymbol{\tilde{z}}_{k1}}^2\leq \frac{1}{2} \abs{\mathfrak{e}_k^T\boldsymbol{\tilde{z}}}^2 \qquad \text{for }1\leq k\leq N\qquad \text{and}\\
    &\boldsymbol{a}_{N+1}\boldsymbol{\tilde{f}}\ttonde{\boldsymbol{z}} &&= \boldsymbol{\lambda}\ttonde{\boldsymbol{z}}\Big({\boldsymbol{\lambda}\ttonde{\boldsymbol{z}} - \frac{1}{N}\boldsymbol{z}_{N1}}\Big) + \frac{1}{2}N \boldsymbol{\lambda}\ttonde{\boldsymbol{z}}^2 \leq \frac{1}{2} \Big({\Big({\boldsymbol{\lambda}\ttonde{\boldsymbol{z}} - \frac{1}{N}\boldsymbol{z}_{N1}}\Big) + \boldsymbol{\lambda}\ttonde{\boldsymbol{z}}  }\Big)^2\\
    & &&\leq \frac{1}{2} \abs{\boldsymbol{a}_{N+1}\boldsymbol{\tilde{z}}}^2.
\end{alignat*}
Furthermore, condition (sAB) is invariant under invertible linear transformation of $\R^N$.
In order to verify condition (sBF) let us introduce an invertible linear transformation on $\R^N$ via 
\begin{align*}
    \bar{Y}^{\ttonde{i}} = Y^{\ttonde{i,N}} - Y^{\ttonde{N,N}},\qquad i=1,\dots,N-1,\qquad \bar{Y}^{\ttonde{N}} = Y^{\ttonde{N,N}},
\end{align*}
define also 
\begin{align*}
    \boldsymbol{\bar{z}} = \begin{pmatrix}
\boldsymbol{z}_{11} - \boldsymbol{z}_{N1} & \boldsymbol{z}_{12} & 0 &\dots&  &0& - \boldsymbol{z}_{NN+1} \\
\boldsymbol{z}_{21} -\boldsymbol{z}_{N1}  & 0 & \boldsymbol{z}_{23} & 0&\dots& 0& -\boldsymbol{z}_{NN+1}\\
\vdots\\
\boldsymbol{z}_{N1}& 0&\dots &  & & 0& \boldsymbol{z}_{NN+1}
\end{pmatrix},
\end{align*}
and $\boldsymbol{\bar{e}}_T$ is defined in a similar manner.
A simple calculation reveals the dynamics of $\boldsymbol{\bar{Y}}$ to be 
\begin{align*}
    \dfr\boldsymbol{\bar{Y}}\ttonde{t} = \boldsymbol{\bar{z}}\ttonde{t}\dfr \boldsymbol{W}\ttonde{t} +  \boldsymbol{\bar{f}}\ttonde{\boldsymbol{\bar{z}}\ttonde{t}}\dfr t,\qquad \boldsymbol{\bar{Y}}\ttonde{T} = \boldsymbol{\bar{e}}_T,
\end{align*}
where 
\begin{align*}
    \boldsymbol{\bar{f}}\ttonde{\boldsymbol{\bar{z}}} = \begin{pmatrix}
    \frac{1}{N}\big({\sum_{i=1}^{N-1}\ttonde{ \boldsymbol{\bar{z}}_{i1} + \boldsymbol{\bar{z}}_{N1}} + \boldsymbol{\bar{z}}_{N1}}\big) \boldsymbol{\bar{z}}_{11} \\
    \frac{1}{N}\big({\sum_{i=1}^{N-1}\ttonde{ \boldsymbol{\bar{z}}_{i1} + \boldsymbol{\bar{z}}_{N1}} + \boldsymbol{\bar{z}}_{N1}} \big)\boldsymbol{\bar{z}}_{21}\\
    \vdots\\
    -\frac{1}{2}N^2\big({\sum_{i=1}^{N-1}\ttonde{ \boldsymbol{\bar{z}}_{i1} + \boldsymbol{\bar{z}}_{N1}} + \boldsymbol{\bar{z}}_{N1}}\big)^2 + \frac{1}{N} \big({\sum_{i=1}^{N-1}\ttonde{ \boldsymbol{\bar{z}}_{i1} + \boldsymbol{\bar{z}}_{N1}} + \boldsymbol{\bar{z}}_{N1}}\big)\boldsymbol{\bar{z}}_{N1}\\
\end{pmatrix}.
\end{align*}
Now let $\gamma\ttonde{\boldsymbol{\bar{z}}}:=\frac{1}{N}\big({\sum_{i=1}^{N-1}\ttonde{ \boldsymbol{\bar{z}}_{i1} + \boldsymbol{\bar{z}}_{N1}} + \boldsymbol{\bar{z}}_{N1}}\big)$
\begin{align*}
    \boldsymbol{I}\ttonde{\boldsymbol{\bar{z}}}:=\begin{pmatrix}
    \gamma\ttonde{\boldsymbol{\bar{z}}} & \gamma\ttonde{\boldsymbol{\bar{z}}}& \dots&\gamma\ttonde{\boldsymbol{\bar{z}}}\\
    0& 0&0&0\\
    \vdots & \vdots& \vdots & \vdots\\
    0& 0&0&0
    \end{pmatrix}
\end{align*}
and observe that 
\begin{align*}
    \abs{\gamma\ttonde{\boldsymbol{\bar{z}}}}\leq \sum_{i=1}^N \abs{ \boldsymbol{\bar{z}}_{i1}}\leq C \abs{\boldsymbol{\bar{z}}}.
\end{align*}
Finally let 
\begin{align*}
    \ttonde{\boldsymbol{q}\ttonde{\boldsymbol{\bar{z}}}}^T:=\big({0,\dots,0,-\frac{1}{2}\gamma\ttonde{\boldsymbol{\bar{z}}}^2}\big)
\end{align*}
and observe that 
\begin{align*}
    \abs{\gamma\ttonde{\boldsymbol{\bar{z}}}}^2\leq C \sum_{i=1}^N\abs{\boldsymbol{\bar{z}}_{i1}}^2 \leq C \sum_{i=1}^N\lvert{\boldsymbol{\bar{z}^i}}\lvert^2.
\end{align*}
Using this, one easily checks that $\boldsymbol{\bar{f}}$ satisfies the condition (sBF).
Lastly we show that there exists $M>0$ such that 
\begin{align*}
    \abs{\boldsymbol{\bar{f}}\ttonde{\boldsymbol{z}}-\boldsymbol{\bar{f}}\ttonde{\boldsymbol{z}'}}\leq M \lvert \boldsymbol{z}- \boldsymbol{z}' \lvert\ttonde{\lvert \boldsymbol{z}\lvert + \lvert \boldsymbol{z}'\lvert }
\end{align*}
take $\boldsymbol{z},\boldsymbol{z'}\in\R^{N\times N+1}$ and recall that
\begin{align*}
    \boldsymbol{\bar{f}}\ttonde{\boldsymbol{z}}=\begin{pmatrix}
    \gamma\ttonde{\boldsymbol{z}} \boldsymbol{{z}}_{11} \\
    \gamma\ttonde{\boldsymbol{z}} \boldsymbol{{z}}_{21}\\
    \vdots\\
    -\frac{1}{2}\gamma\ttonde{\boldsymbol{z}}^2 + \gamma\ttonde{\boldsymbol{z}}\boldsymbol{{z}}_{N1}\\
\end{pmatrix}, \boldsymbol{\bar{f}}\ttonde{\boldsymbol{z}'}=\begin{pmatrix}
    \gamma\ttonde{\boldsymbol{z}'} \boldsymbol{{z}'}_{11} \\
    \gamma\ttonde{\boldsymbol{z}'} \boldsymbol{{z}'}_{21}\\
    \vdots\\
    -\frac{1}{2}\gamma\ttonde{\boldsymbol{z}'}^2 + \gamma\ttonde{\boldsymbol{z}'}\boldsymbol{{z}'}_{N1}\\
\end{pmatrix}.
\end{align*}
In order to estimate $\abs{\boldsymbol{\bar{f}}\ttonde{\boldsymbol{z}}-\boldsymbol{\bar{f}}\ttonde{\boldsymbol{z}'}}$ we notice that 
\begin{align*}
    \abs{\gamma\ttonde{\boldsymbol{z}} - \gamma\ttonde{\boldsymbol{z}'}} = \frac{1}{N} \abs{\sum_{i=1}^{N-1}\ttonde{\ttonde{\boldsymbol{z}_{i1} - \boldsymbol{z}'_{i1}} + \ttonde{\boldsymbol{z}_{N1}-\boldsymbol{z}'_{N1}}}+\ttonde{\boldsymbol{z}_{N1}-\boldsymbol{z}'_{N1}} }&\leq C\sum_{i=1}^N \lvert \boldsymbol{z}_{i1}- \boldsymbol{z}'_{i1}\lvert\\ &\leq C \lvert \boldsymbol{z}- \boldsymbol{z}' \lvert.
\end{align*}
and that we just need to estimate the last component which we do as follows
\begin{align*}
    &\frac{1}{2}\abs{\gamma\ttonde{\boldsymbol{z}}^2-\gamma\ttonde{\boldsymbol{z}'}^2} + \abs{\gamma\ttonde{\boldsymbol{z}}\boldsymbol{{z}}_{N1}- \gamma\ttonde{\boldsymbol{z}'}\boldsymbol{{z}'}_{N1}}\\
    &\leq C\abs{\gamma\ttonde{\boldsymbol{z}} - \gamma\ttonde{\boldsymbol{z}'}}  \abs{\gamma\ttonde{\boldsymbol{z}} + \gamma\ttonde{\boldsymbol{z}'}} + \abs{\gamma\ttonde{\boldsymbol{z}}\boldsymbol{{z}}_{N1} - \gamma\ttonde{\boldsymbol{z}'}\boldsymbol{{z}}_{N1}} + \abs{\gamma\ttonde{\boldsymbol{z}'}\boldsymbol{{z}}_{N1} - \gamma\ttonde{\boldsymbol{z}'}\boldsymbol{{z}'}_{N1}}\\
    &\leq C_2\lvert \boldsymbol{z}- \boldsymbol{z}' \lvert\ttonde{\lvert \boldsymbol{z}\lvert + \lvert \boldsymbol{z}'\lvert } + \lvert \boldsymbol{{z}}_{N1} \lvert \abs{\gamma\ttonde{\boldsymbol{z}} - \gamma\ttonde{\boldsymbol{z}'}} + \abs{\gamma\ttonde{\boldsymbol{z}'}}\abs{\boldsymbol{{z}}_{N1} - \boldsymbol{{z}'}_{N1}}\leq C_3 \lvert \boldsymbol{z}- \boldsymbol{z}' \lvert\ttonde{\lvert \boldsymbol{z}\lvert + \lvert \boldsymbol{z}'\lvert }.
\end{align*}
Applying {Theorem \ref{exandunsolsyst}} we get our result.
\end{proof}
\printbibliography
\end{document}